\newcommand{\wenbo}[1]{{\color{blue}{\bf\sf [WB: #1]}}}
\newtheorem{theorem}{Theorem}
\newtheorem{lemma}{Lemma}
	\providecommand\BibTeX{{%
			\normalfont B\kern-0.5em{\scshape i\kern-0.25em b}\kern-0.8em\TeX}}}
\title{Dynamic Window-level Granger Causality of Multi-channel Time Series}
\author{
	Zhiheng Zhang \\
	Beijing University of Posts and Telecommunications\\
	\texttt{studyzzh@163.com}
	\And
	Wenbo Hu \\
	Tsinghua University \\
	\texttt{wenbohu@tsinghua.edu.cn}
	\And 
	Tian Tian \\
	Tsinghua University \\
	\texttt{tian-tian@tsinghua.edu.cn}
	\And 
	Jun Zhu \\
	Tsinghua University \\
	\texttt{dcszj@tsinghua.edu.cn}
}
\begin{document}
	\maketitle
	
	\begin{abstract}
		Granger causality method analyzes the time series causalities without building a complex causality graph.
		However, the traditional Granger causality method assumes that the causalities lie between time series channels and remain constant, which cannot model the real-world time series data with dynamic causalities along the time series channels.
		In this paper, we present the dynamic window-level Granger causality method (DWGC) for multi-channel time series data. We build the causality model on the window-level by doing the F-test with the forecasting errors on the sliding windows. We propose the causality indexing trick in our DWGC method to reweight the original time series data. Essentially, the causality indexing is to decrease the auto-correlation and increase the cross-correlation causal effects, which improves the DWGC method. Theoretical analysis and experimental results on two synthetic and one real-world datasets show that the improved DWGC method with causality indexing better detects the window-level causalities.
		
		\keywords{Causal inference  \and Time Series \and Nonlinear Autoregressive \and{Dynamic window-level}}
	\end{abstract}


	\section{Introduction}
	
	\noindent Time series data is the data with the pre-defined time or sequencial order~\cite{hamilton1994time} and is widely used in multifarious real-world applications, such as  signal processing~\cite{scharf1991statistical}, economics~\cite{granger2014forecasting}, control theory~\cite{box2015time}, etc. 
	Typical tasks for time series data include indexing, clustering, classification, and regression~\cite{keogh2003need}. 
	Among the time series tasks, the causal reasoning task is at the top level of cognitive reasoning~\cite{pearl2018theoretical} and is getting rid of the nature of the correlation fitting for traditional statistical machine learning and deep learning, which is nevertheless at the bottom level.
	Due to the natural implication of the temporal precedence, the time-series data therein encapsulates both empirical experiences of the trends, but also the prior knowledge of causalities between different channels~\cite{eichler2013causal}.
	
	Granger~\cite{granger1969investigating} first used the statistical hypothesis test to decide whether one time series channel is useful to predict another, which is known as \emph{Granger causality}~(GC) and widely used in various applications. 
	Lynggarrd and Walther~\cite{lynggarrd1993dynamic} proposed the dynamic interaction models based on the classical `LWF Markov property' for chain graphs~\cite{lauritzen1989graphical,frydenberg1990chain}.
	Pearl and Robins~\cite{pearl1995probabilistic} put forward the `back door' temporal causal conditions and extend the traditional Granger causality to the temporal sequences.   Dahlhaus and Eicher~\cite{dahlhaus2003causality} discussed an alternative approach that defined the graph according to the `AMP Markov property' of \cite{andersson2001alternative}. Eichler~\cite{eichler2012graphical} adopted the mixed graph constraints, derived from ordinary time series, and used a single vertex and directed edges to represent the component series and the causal relationship respectively.
	
	The traditional Granger causality methods are limited to the predictive capability of the autoregressive model (AR) which only used the linear models. 
	Eichler~\cite{eichler2013causal} re-described the problems when transferring the Granger causality to the nonlinear version: 1) the aggregation of the time-varying coefficients overtime required by Granger causality tests, and 2) the instability of the causal structure.
	Chen et al.~\cite{chen2004analyzing} used a delay embedding to get an extended nonlinear version of Granger Causality. 
	Later, Sun~\cite{sun2008assessing} proposed to use the RKHS kernel embeddings to get the nonlinearity.
	
	The aforementioned works assume that the Granger causalities remain unchanged between time series channels throughout time~\cite{GRANGER1980329}, which is called ``channel-level Granger causality''.
	In the real world, the time series data is becoming massive, complicated and uncertain, and the causality relations would change along with the sliding windows of the time series data (see Fig.~\ref{fig:dwgc} as an example).
	
	According to the philosopher David Hume, ``cause and effect have temporal precedence\cite{Beauchamp1981Hume}''. This acquiesced that causality or precedence itself is in relation with, and even change dynamically with time.
	In this sense, the constant causality assumption\cite{GRANGER1980329} is not always true and the purpose of this assumption for the traditional Granger causality is to take sufficient long (but definitely not necessary) time series in order to distinguish the random correlation and causalities~\cite{GRANGER1980329}. 
	Some attempts on dynamic causalities have already been done in neural science: the ``dynamic causal modeling'' (DCM) method detects the real-time dynamic causal relationships among neuron clusters in the brain. However, it's by interfering the ``input-state-ouput'' framework in brain and observing its response and cannot directly extract causalities without the model that have relationships like brain~\cite{FRISTON20031273}.

	\begin{figure}[htbp]
		\centering
		\subfigure[Traditional Channel-level Granger causality]{
			\includegraphics[width=5.5cm]{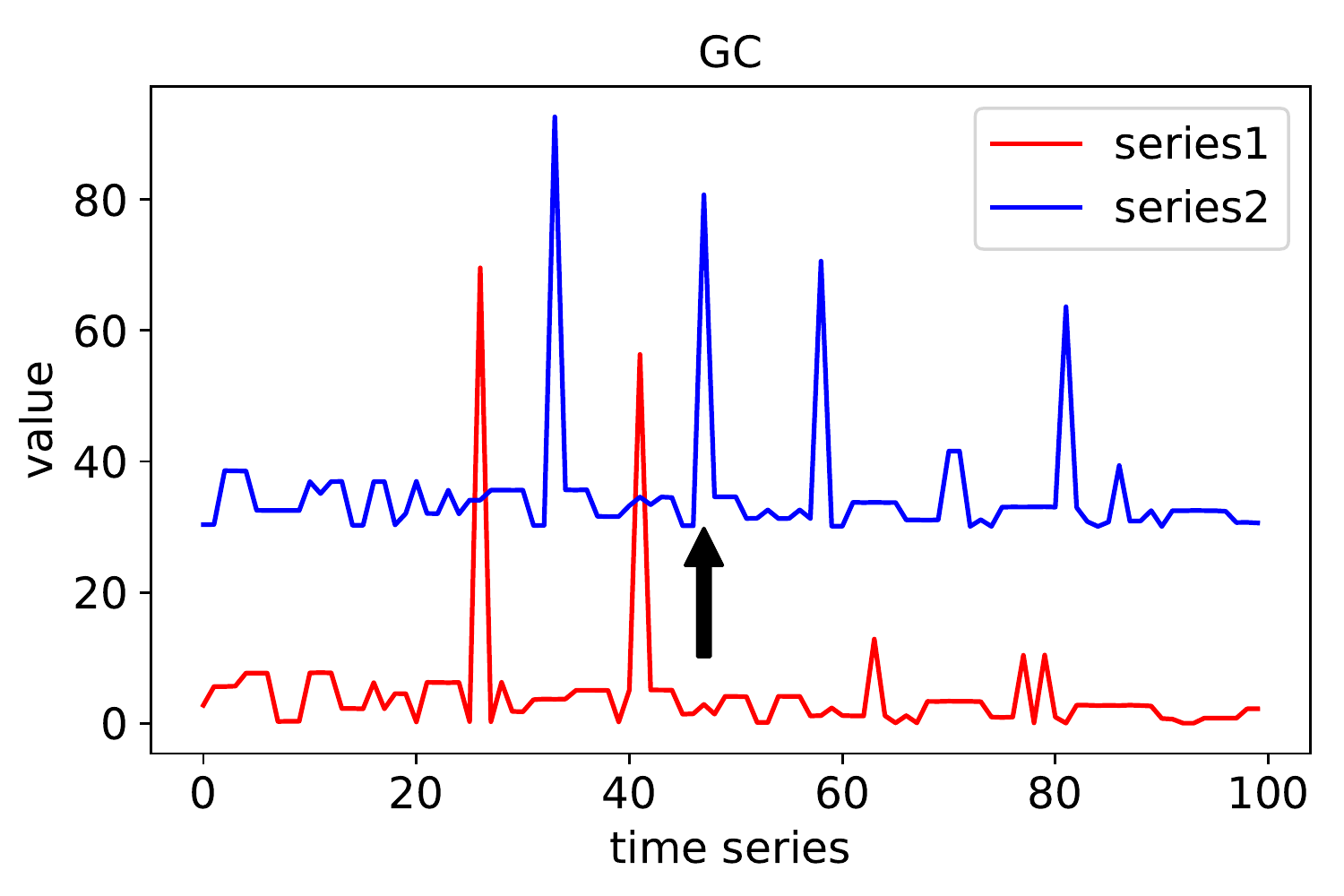}
		}
		\subfigure[Dynamic Window-level Granger Causality]{
			\includegraphics[width=5.5cm]{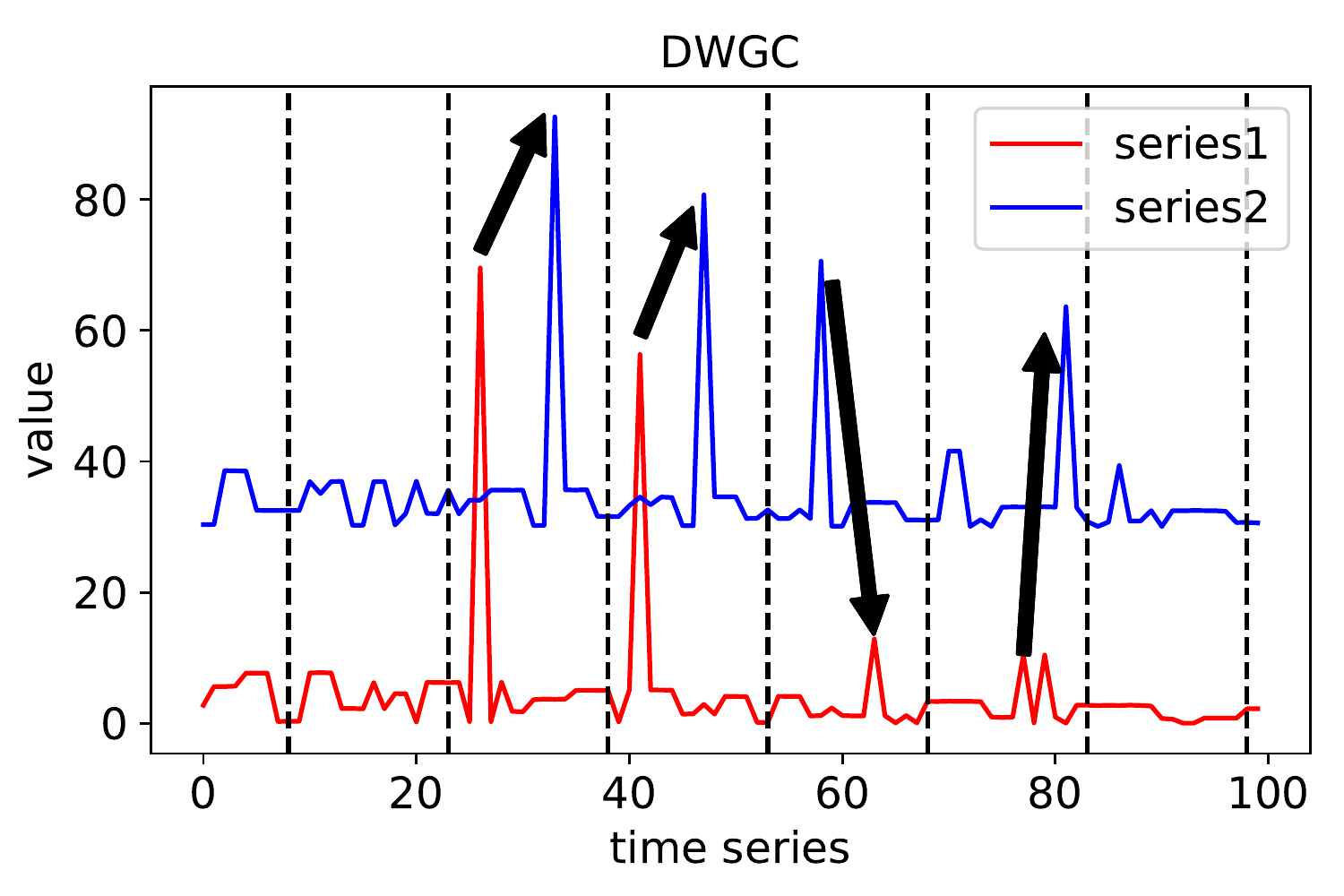}
		}
		\caption{Window-level causality(right) is to analyze dynamic changing causalities on sliding windows on time series. While channel-level method assume causalities remain constant between time series channels(left)}
		\label{fig:dwgc}
	\end{figure}

	\subsection{Our Proposal}
	\noindent In this paper, we present the dynamic window-level Granger causality method~(DWGC) for time series data.
	To capture the dynamic causality relations, we relax the sufficiently long time series assumption and build the causality model on the sliding windows of the time series.
	To capture the non-linearities in the time series forecasting, we use nonlinear autoregressive model to fit the time series and extract the nonlinear features. 
	Based on the NAR model predictions, an F-test is used by comparing the prediction results of NAR models on the sliding windows. 
	Further, to reduce the possible fluctuations of auto-correlation on window-level F-test, we introduce a causality-index matrix and optimize the corresponding causality indexing loss accuracy. 
	We theoretically prove that: 1) the traditional Granger causality is a special case of our dynamic window-level Granger method; 2) the dynamic window-level Granger method outperforms the traditional Granger causality with the causality indexing.
	

	In the experiments, 
	we implement our DWGC model on three datasets, two synthetic data, and one real-world meteorological data (with prior knowledge). For the synthetic data, we use AR/NAR simulator to generate several synthetic time series. For the real-world meteorological dataset, we examine the obtained dynamic causalities between the El-Nino values and the East Asian monsoon over the seasonal cycle which was already examined in the previous literature.
	
	In summary, the contributions of this paper are as follows:
	\begin{itemize}
		\item We propose and solve a new task: identify dynamic window-level causality of multi-channel time series.
		\item We theoretically show that the proposed dynamic window-level Granger causality model contains the traditional Granger causality as a special case and is more accurate with the causality indexing.
		\item We conduct the numerical experiments on two synthetic and one real-world datasets and the proposed dynamic window-level Granger method shows that this method can find the accurate window causalities. 
	\end{itemize}
	
	
	
	\section{Related Works}
	\subsection{Preliminaries of Granger Causality}
	\label{sec:related works}

	\noindent Granger causality is the most widely used causality analysis method for time series data~\cite{granger1969investigating}, which is also the main focus of this paper.
	We use the following sequence to represent the stationary sequence of random variables, i.e, the time series data $i$: 
	\begin{equation}
	{(Y_{i1},Y_{i2},Y_{i3},\cdots,Y_{it},\cdots)},
	\end{equation}
	where ${Y_{i,t}}$ and $Y_{<t}$ are the series data $i$ at and before time $t$. 
	Granger causality defines ${Y_{i}}$ as the cause of ${Y_j}$, if the series $i$ provides useful information when predicting the future values of series $j$:
	\begin{equation}
	E_{t}(g_j(Y_{j,t+1}|Y_{j,<t},Y_{i,<t})) \neq E_{t}(g_j(Y_{j,t+1}|Y_{j,<t})), \label{1}
	\end{equation}
	where $E_{t}$ is the accuracy expectation of the prediction function $g$ on time series. Besides, the premise of \eqref{1} is that channels i and j meet the ``backdoor condition\cite{10.5555/2074158.2074209}'', that is, there is no common interference from confounding factors of other channel. Also, it's worth pointing out that Granger causality is still a type of statistical association as it does not go through the necessary causal identification process of \cite{pearl2018theoretical}.
	
	Traditional Granger causality methods use a linear auto-regressive model for $g$. 
	In this paper, we use the double-headed arrow to express the Granger causality $i$ to $j$:
	\begin{equation}
	Y_{i}{\Longrightarrow}Y_{j}.
	\end{equation}
	Granger causality is considered as a ``precedence'' according to the informal fallacy ``Post hoc ergo propter hoc''. This Latin fallacy means ``after this, therefore because of this''. It shows the causality as the precedence that the follow-up event is caused by the previous event.
	
	The linear Granger causality can be extended to the nonlinear version for better fitting nonlinear sequences. 
	In~\cite{tank2018neural}, a non-linear prediction model, such as the multi-layer perceptron, is defined as $Y_{jt}=g_{j}(Y_{1,<t},\cdots,Y_{d,<t})$.
	Then, we can say $Y_{i}{\Longrightarrow}Y_{j}$, if for all $Y_{i,<t}^{'} \neq Y_{i,<t}$:
	\begin{equation}
	g_{j}(Y_{1,<t},...,Y_{i,<t},...,Y_{d,<t}) \neq g_{j}(Y_{1,<t},...,Y_{i,<t}^{'},...,Y_{d,<t}).
	\end{equation}
	$g$ is represented as MLP(Multilayer Perceptron)/LSTM(Long Short-Term Memory) network in \cite{tank2018neural}. Both linear and nonlinear Granger causality method assumes that the causalities between time series are constant and cannot model the dynamic causalities that lie in the real world.

	\subsection{Improvements of Granger method to generalize to the nonlinearity/window level}
	\noindent More recently, the deep neural networks are used to get nonlinear Granger causalities~\cite{jones1978nonlinear,chivukula2018discovering,xu2019scalable,duggento2019echo,tank2018neural,he2019causalbg}. In these previous works, the neural network is used to replace the original AR model for sequence fitting and prediction, and then causal reasoning methods (such as counterfactual principle, Granger method) are brought into the framework of neural network. 
	
	On the other hand, Granger's approach has been extended from channel level to window level. Sezen Cekic\cite{cekic2018time} use KL(Kullback-Liebler) divergence instead of F-test to extend the window-level Granger method, first in neuroscience. 
	
	To both deal with nonlinearity and window-level challenges, sliding window method is common, which is also applicable in Granger causality situation. The neuroscience Granger causality extended to the nonstationary case by wavelet transforms or multitapers on sliding window by Mattia F.Pagnotta\cite{PAGNOTTA2018478}, whereas as far as we know, there is no related method that can address both the nonlinear and the window-level problems in a wider application scenario.
	\subsection{Causal Graph for Time Series}
	\noindent Besides the Granger causality, the causal graph is another way to identify the causality between different channels of time series~\cite{xu2019scalable}. The causal graph model can be built based on vector autoregressive Granger analysis(VAR), the generalized additive models(GAMs)~\cite{lutkepohl2005new} or the certain pre-assumed regression models \cite{sommerlade2012inference}.
	However, the causal graph is limited to the structure itself to extend to the dynamic window-level for time series data.
	
	\subsection{Distinction between causal effects and noise}
	\noindent In each window, it is a subtle topic to distinguish whether the time series trend variations are due to the causal effects or random noise. 
	On the one hand, commonly-used time series anomaly detection methods barely consider causal effects when detecting different abnormal noises.
	In \cite{brodersen2015inferring}, although an attempt is made to use anomaly detection to divide the actual sequence into two parts: steady-state structure part and causal influence part, the second part does not make a practical distinction between noise and causality. On the other hand, the traditional Granger causality method keeps the ignorance of noise intervention in most cases. 
	\section{Dynamic Window-level Granger Causality}
	\subsection{Naive Dynamic Window-level Granger Causality Model}
	
	\noindent To detect dynamic window-level causality between two data series $Y_i$ and $Y_j$, we first use a sliding window of length $k$ on the same time position $t$ of the two series:
	\begin{equation}\{Y_{i,t},Y_{i,t+1}, \cdots, Y_{i,t+k-1}\},~~~\{Y_{j,t},Y_{j,t+1},\cdots, Y_{j,t+k-1}\}.
	\end{equation}
	
	Aiming at finding the dynamic causality at the window level, we consider two forms of time-series fitting on each sliding window $(t, t+k-1)$: predicting the future values of one series with and without the information from the other series channel, which are similarly used in the traditional Granger causality method. Compared to equation \eqref{1}, we use nonlinear auto-regressive(NAR) model, and use mean square error(MSE)\cite{Aneesh2018Discovering} as $E_t$ to measure two accuracies:
	\begin{eqnarray}
	\label{MSE1}
	L_{1}=E(\mathrm{MSE}({\hat Y^{}_{i,t\sim t+k-1}}, Y_{i,t\sim t+k-1}|Y_{i,<t})),\\ 
	\label{MSE2}
	L_{2}=E(\mathrm{MSE}({\hat Y^{}_{i,t\sim t+k-1}}, Y_{i,t\sim t+k-1}|Y_{i,<t},Y_{j,<t}),
	\end{eqnarray}
	where $\hat Y^{}_{i,t\sim t+k-1}$ means prediction on the sliding window $(t, t+k-1)$. We determined the existence of causality by setting a reasonable  threshold $\epsilon$ based on the value of $F_{statistic}$:
	\begin{equation}
	F_{statistic} = \frac{L_{1}}{L_{2}}. \label{eqn:f-statistic}
	\end{equation} 
	The causality exists between the series $i$ and $j$ on the sliding window $(t, t+k-1)$ if $F_{statistic}$ is larger than the pre-defined threshold  $\epsilon(>1)$.
	
	\subsection{Dynamic Window-level Causality Model}

	\noindent In the naive version of the dynamic window-level causality method, auto-correlation would easily occur with the disturbances of the NAR model along the time series.
	To address this problem, we introduce the causality indexing matrix $\Phi$ in our method and then decompose the causal effects and auto-regressive correlations. The original time series includes two factors: the cross-correlation and the auto-correlation. By converting to windows-level, the auto-correlation becomes more unstable and conceals the causal effects in the cross-correlation. 
	The auto-regression correlations in one single series is represented in $L_1$ in Eqn.~\eqref{MSE1}, intuitively, when auto-correlation is unusually large on a local window, it will make the window-level f-test result higher than the normal value, which may affect the overall accuracy of the model, or affect the overall recall rate conversely. We use a scale function $h$ to scale down the auto-correlation and adopt a corresponding causal indexing matrix $\Phi$, to measure the likelihood that each time point will serve as a starting or ending point for causality. The indexing loss as 
	\begin{equation}
	Loss =\sum_{m,i}\mathrm{KL}\left(\Phi_{m}^{i},\{{}{h(\hat Y^{}_{i,q}- Y_{i,q})^2}\}\right), q=m, m+1,...m+k-1,
	\label{eqn:model_loss}
	\end{equation}
	where $\hat Y^{},Y$ are the prediction results and the real series respectively, $i$ is the channel index and $m$ is the starting point of each sliding window with length $k$.
	By optimizing this loss, $\Phi$ can be used to scale down the original series data $Y$ with large auto-correlations as:
	\begin{equation}
	\begin{aligned}
	\begin{bmatrix}A_{i,t}\\A_{i,t+1}^{}\\...\\A_{i,t+k-1 }\end{bmatrix} = 
	\Phi^{i}_{t\sim t+k-1} *
	\begin{bmatrix}Y_{i,t}\\Y_{i,t+1}\\...\\Y_{i,t+k-1}\end{bmatrix}, i=1,2,...d,
	\end{aligned}
	\label{eqn:index_series}
	\end{equation}
	where $*$ is Hadamard product.
	With the causality indexing $\Phi$, we scale down the auto-correlation and get the reweighted series $A$.
	In this paper, we use the following scaling function:
	\begin{equation}
	h=(\alpha-\tanh(\cdot)),\alpha>1.
	\label{eqn:scale function}
	\end{equation}
	It is worth mentioning that the selection of the scaling function $h$ can be further improved by certain regularization item and this will be analyzed in the Appendix~\ref{proof-theorem-2}.
	
	\begin{algorithm}[ht]
		\SetAlgoLined
		\KwData{Multi-channel time series $Y_{t}$ and predefined F-test threshold $\epsilon$.}
		initialization, $\Phi$ set to all-one\;
		\While{NAR forecast loss and causal indexing loss converge}{
			Reweight original time series using causal index matrix $\Phi$ via Eqn.~\eqref{eqn:index_series}\;
			Train an NAR model using reweighted series $A_{t}$ as the input\;
			Finding dynamic causalities via Eqn.~\eqref{eqn:f-statistic-reweight}\;
			In all the window pairs with the detected causalities, optimize Loss \eqref{eqn:model_loss} to optimize $\Phi$\;
		}
		Position causality from window-level to point-to-point level by \eqref{eqn:p-to-p-causality}.
		
		\KwResult{Dynamic window level causalities}
		
		\caption{ Framework of our method(DWGC)}
		\label{alg:Framwork}  
	\end{algorithm}
	
	We test the dynamic window-level causality using the causality indexing as:
	\begin{equation}
	F_{statistic} = \frac{L_{1}}{L_{2}}
	=\frac{E(MSE({\hat A^{}_{i,t\sim t+k-1}}, A_{i,t\sim t+k-1}|A_{i,<t}))}{E(MSE({\hat A^{}_{i,t\sim t+k-1}}, A_{i,t\sim t+k-1}|A_{i,<t},A_{j,<t}))}.
	\label{eqn:f-statistic-reweight}
	\end{equation} 
	
	We further obtain the starting and ending points in the sliding window by finding the maximum of the index matrix:
	\begin{eqnarray}
	Y_{i,t_1} &\stackrel{}{\Longrightarrow} & Y_{j,t_2}\\
	\{t_1, t_2\}=\mathop{\max}&&\{\Phi_{t_1}^{i}+\Phi_{t_2}^j|t_1<t_2,~~~t_1,t_2 \in (t,t+k-1)\}.
	\label{eqn:p-to-p-causality}
	\end{eqnarray}

	In our dynamic window-level Granger causality (DWGC) method, we alternatively optimize the NAR forecasting model and the causality index loss Eqn.~\eqref{eqn:model_loss} and extract causalities after the two loss function converge.
	The final procedure is outlined in Algorithm~\ref{alg:Framwork}. 
	
	\section{Theoretical Analysis of DWGC}
	\noindent In this section, we give the theoretical analysis of the proposed dynamic window-level Granger causality method(DWGC).
	We first give the formulation preparation and then prove that 1) DWGC without causality indexing is a special case of the traditional Granger causality method and 2) DWGC with causality is more accurate than the traditional Granger causality method for those causality pairs.
	
	We consider the sample pairs whose expectation of $F_{statistic}$ $F_0$ is larger than the predefined threshold $\epsilon$. These samples are expected to be tested as causal in the traditional Granger causality method.
	
	\subsection{Formulation Preparation}
	\noindent The time series observations $Y_t$ can be decomposed into two parts: real data and Gaussian noise. For simpilicity, We use the standard Gaussian for the random noise and get the following decomposition of the time series $i$: 
	\begin{equation}
	Y_i=Y_i^{\text{real}}+\gamma_i, \gamma \sim \mathcal{N}(0,1)
	\end{equation}
	We denote the model predictions of the NAR model with/without another causality source series channel $j$ as $\hat{Y}_{i}$ and $\hat{Y}_{i|j}$.
	
	Then the F-statistic of Eqn.~\eqref{eqn:f-statistic}, i.e, the ratio between the MSE of the NAR models with/without the channel $j$, can be turned to
	\begin{equation}
	\begin{aligned}
	F_{statistic} = &\frac{L_1}{L_2} = \frac{{\sum_{t}{(\hat{Y}_{it}-Y^{\text{real}}_{it}-\gamma)}^2}}{{\sum_{t}{(\hat{Y}_{it|j}-Y^{\text{real}}_{it}-\gamma)}^2}}\\
	=& \frac{\sum_{t}{(\hat{Y}_{it}-Y^{\text{real}}_{it})}^2+\sum_{t}^{} \gamma_{it}^2 -2\sum_{t} \gamma_{it}(\hat{Y}_{it}-Y^{\text{real}}_{it})}{\sum_{t}^{}{(\hat{Y}_{it|j}-Y_{it}^{\text{real}})}^2+\sum_{t}^{} \gamma_{it}^2 -2\sum_{t}^{} \gamma_{it}(\hat{Y}_{it|j}-Y_{it}^{\text{real}})}\\
	\approx& \frac{\sum_{t}{(\hat{Y}_{it}-Y^{\text{real}}_{it})}^2+\sum_{t}^{} \gamma_{it}^2 }{\sum_{t}^{}{(\hat{Y}_{it|j}-Y_{it}^{\text{real}})}^2+\sum_{t}^{} \gamma_{it}^2}.
	\label{eqn:f-statistic-window}
	\end{aligned}
	\end{equation}
	The square sum of the standard normal noise $\sum_{t}\gamma_{t}^{2}$ follows the chi-square distribution $\chi^2(k)$, where $k$ is the length of the sliding window. When the window length $k$ is reasonably large, the third term in the above formulation can be omitted. The details of the omitting derivation can be seen in Appendix \eqref{eq:lemma 1}.
	

	\subsection{DWGC without causality indexing degenerates to Traditional Granger causality}
	\noindent We analyze the distribution of the F-statistic of the naive DWGC and give the following result.
	\begin{theorem}
		\label{theorem:dwgc1}
		For the time series sliding windows with causalities, the probability of F-statistic of naive DWGC larger than the threshold $P(F(k)>1)$ is a monotone increasing function for $k$ in the case where $k$ is sufficiently large. 
	\end{theorem}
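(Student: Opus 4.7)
The plan is to reduce the event $\{F(k)>1\}$ to a direct comparison of two sums of squared within-window residuals, and then to show that for causal pairs this comparison favors causality with a probability that grows monotonically in the window length $k$. Starting from the approximation in Eqn.~\eqref{eqn:f-statistic-window}, after omitting the cross term as justified in the formulation preparation, the event $F(k)>1$ is equivalent to $N(k)>D(k)$, where
\[
N(k)=\sum_{t}(\hat Y_{it}-Y^{\text{real}}_{it})^2,\qquad D(k)=\sum_{t}(\hat Y_{it|j}-Y^{\text{real}}_{it})^2
\]
are the in-window squared prediction errors of the NAR model without and with the candidate source $j$; the noise sum $\sum_{t}\gamma_{it}^2$ appears identically in the numerator and denominator of $F$ and hence cancels out of the inequality $F>1$.

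For a truly causal window, including channel $j$ strictly improves the predictor in expectation, so the per-step means satisfy $n:=E[(\hat Y_{i,\cdot}-Y^{\text{real}}_{i,\cdot})^2]>d:=E[(\hat Y_{i,\cdot|j}-Y^{\text{real}}_{i,\cdot})^2]$, giving $E[N(k)-D(k)]=(n-d)\,k$ under stationarity. A central limit theorem (valid under standard weak-dependence conditions on the stationary time series) then yields
\[
\frac{N(k)-D(k)-(n-d)k}{\sqrt{k}}\ \xrightarrow{d}\ \mathcal{N}(0,\sigma^2)
\]
for some $\sigma^2>0$, whence $P(F(k)>1)=P(N(k)>D(k))\approx \Phi\!\left((n-d)\sqrt{k}/\sigma\right)$. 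Because $n-d>0$, the right-hand side is strictly increasing in $k$.

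To upgrade this asymptotic monotonicity to a genuine statement for all $k$ sufficiently large, I would invoke a Berry-Esseen-type bound so that the gap between $P(F(k)>1)$ and $\Phi((n-d)\sqrt{k}/\sigma)$ is $O(1/\sqrt{k})$; since the Gaussian approximant itself has derivative of order $1/\sqrt{k}$ and constant sign in $k$, the monotonicity of the true probability follows once $k$ exceeds a threshold depending only on $n-d$ and $\sigma^2$. I expect the main obstacle to be the CLT step: residuals of a nonlinear autoregressive fit are neither independent nor identically distributed, so one must appeal to $\alpha$-mixing or a similar weak-dependence result for stationary NAR processes, which the paper only implicitly assumes through its stationarity hypothesis. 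A subsidiary technical point is controlling the error made by dropping the cross term $-2\sum_{t}\gamma_{it}(\hat Y_{it}-Y^{\text{real}}_{it})$, but this is precisely what the appendix referenced in the formulation preparation supplies, so it is safe to invoke.
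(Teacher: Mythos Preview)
Your reduction of $\{F(k)>1\}$ to $\{N(k)>D(k)\}$ is fine and matches the paper's starting point. After that, however, your route diverges substantially from the paper's: the paper does \emph{not} use a CLT or Berry--Esseen argument. It assumes the per-step fitting errors are Gaussian, so that $N(k)$ and $D(k)$ are scaled $\chi^2(k)$ variables, and computes $P(F(k)>1)$ exactly as a Gamma/Beta-type series (see Eqn.~\eqref{eqn:second proof} in Appendix~\ref{proof of theorem 1}), then differentiates the resulting expression in $k$ using digamma identities to show the series term decreases in $k$. A longer first proof works with the full density $f_{F_{\text{statistic}}}^k(\epsilon)$, expands it as $\sum_i g(\epsilon,k,i)$, and shows $g(\epsilon,k,i)$ is increasing in $k$ for $\epsilon>1$. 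Both arguments rely on the exact chi-square structure, not on asymptotic normality.

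Your proposal has a genuine gap at the Berry--Esseen step. The Gaussian approximant is $\Phi\!\big((n-d)\sqrt{k}/\sigma\big)$, whose derivative in $k$ is
\[
\frac{d}{dk}\,\Phi\!\left(\frac{(n-d)\sqrt{k}}{\sigma}\right)
=\varphi\!\left(\frac{(n-d)\sqrt{k}}{\sigma}\right)\cdot\frac{n-d}{2\sigma\sqrt{k}},
\]
and since the argument of $\varphi$ tends to $+\infty$, this derivative is \emph{exponentially small} in $k$, not of order $1/\sqrt{k}$ as you claim. A Berry--Esseen bound only gives $|P(F(k)>1)-\Phi((n-d)\sqrt{k}/\sigma)|=O(1/\sqrt{k})$, which swamps the exponentially small increment of the approximant; hence you cannot conclude that $k\mapsto P(F(k)>1)$ is monotone from this comparison. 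To rescue a normal-approximation strategy you would need a large-deviations or Edgeworth-type estimate with error decaying faster than the increment of $\Phi$, and that is a much stronger (and more delicate) input than Berry--Esseen. Alternatively, adopting the paper's chi-square modelling of the residual sums gives an exact closed form whose monotonicity can be checked directly, which is what the paper does.
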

	
	We prove the above theorem using the series expansion of $P(F(k)=\epsilon)$ on each $k$, and leave the details in Appendix~\ref{proof of theorem 1}.
	
	This theorem can be intuitively analyzed from the expression of the F-statistic in Eqn.\eqref{eqn:f-statistic-window}: 
	for numerator, $E(\sum_{t}{(\hat{Y}_{it}-Y^{\text{real}}_t)}^2+\sum_t^k \gamma_t^2)=k(E(\hat{Y}_{it}-Y^{\text{real}}_t)^2+1)$, for denominator, $E(\sum_{t}{(\hat{Y}_{it|j}-Y^{\text{real}}_t)}^2+\sum_t^k \gamma_t^2)=k(E(\hat{Y}_{it|j}-Y^{\text{real}}_t)^2+1)$, so when k is sufficiently large, Eqn.\eqref{eqn:f-statistic-window} can be approximate to $\frac{(\hat{Y}_{it}-Y^{\text{real}}_t)^2+1}{(\hat{Y}_{it
			|j}-Y^{\text{real}}_t)^2+1}(>1)$.

	Without the causal indexing, the larger the sliding window length $k$ is, the more accurate the DWGC method without the causality indexing would be. When $k$ goes to infinity, the DWGC method without the causality indexing degenerate to the traditional causality method.
	
	Fig.~\ref{fig:f-statistic}(left) is the scatter heatmap of F-statistic of the naive DWGC method. The x-axis is the sliding window length and the y-axis is the is the value of the F-statistic. This figure is done based on an experiment for a synthetic AR-simulation data with causalities, which will be further illustrated in the experiment section. As can be seen from the results, the expectation of F-statistic of naive DWGC is monotone increasing for the window length $k$.

	\begin{figure}
		\centering
		\begin{minipage}[c]{0.5\textwidth}
			\centering
			\includegraphics[height=4.5cm,width=7cm]{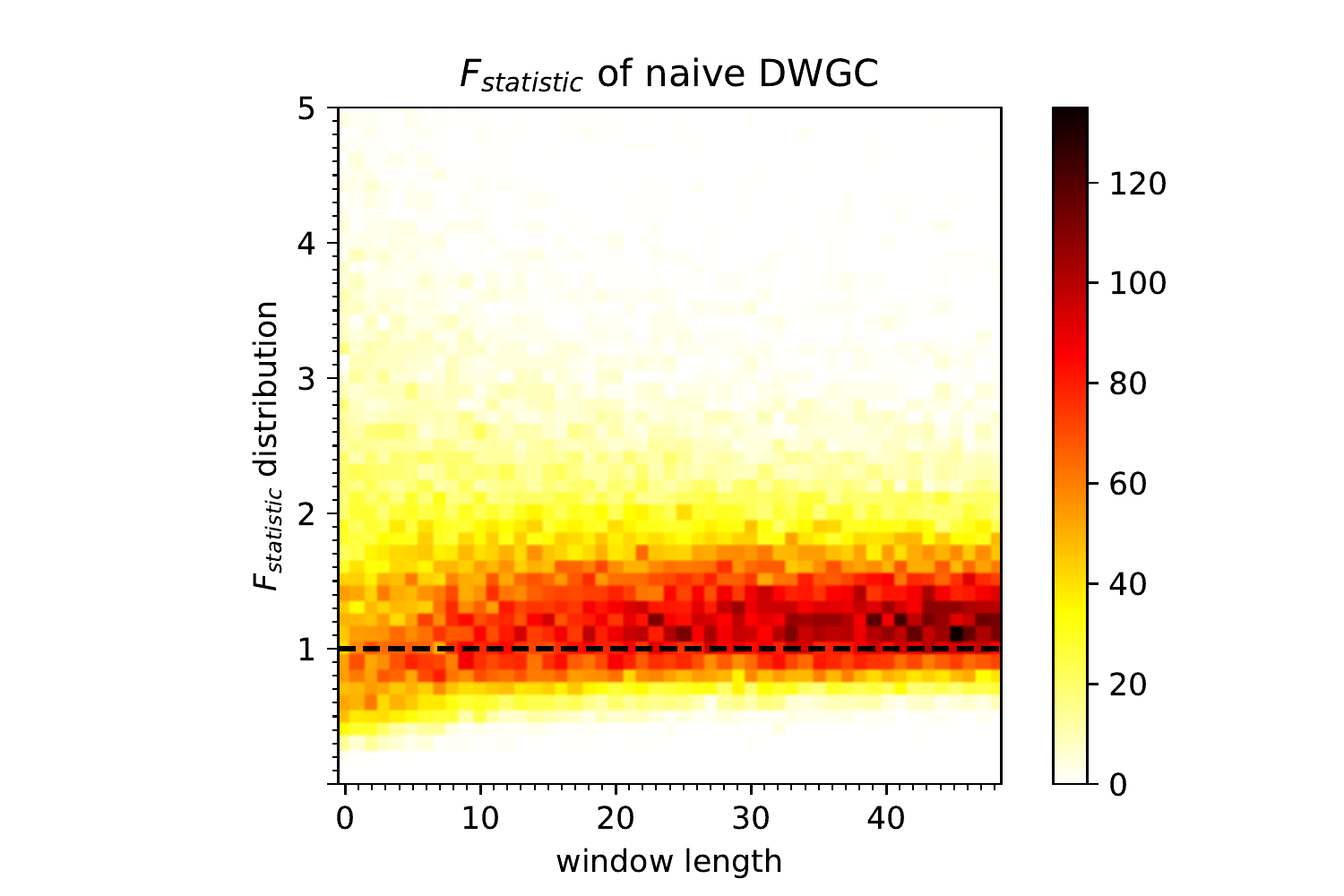}
		\end{minipage}%
		\begin{minipage}[c]{0.5\textwidth}
			\centering
			\includegraphics[height=4.5cm,width=7
			cm]{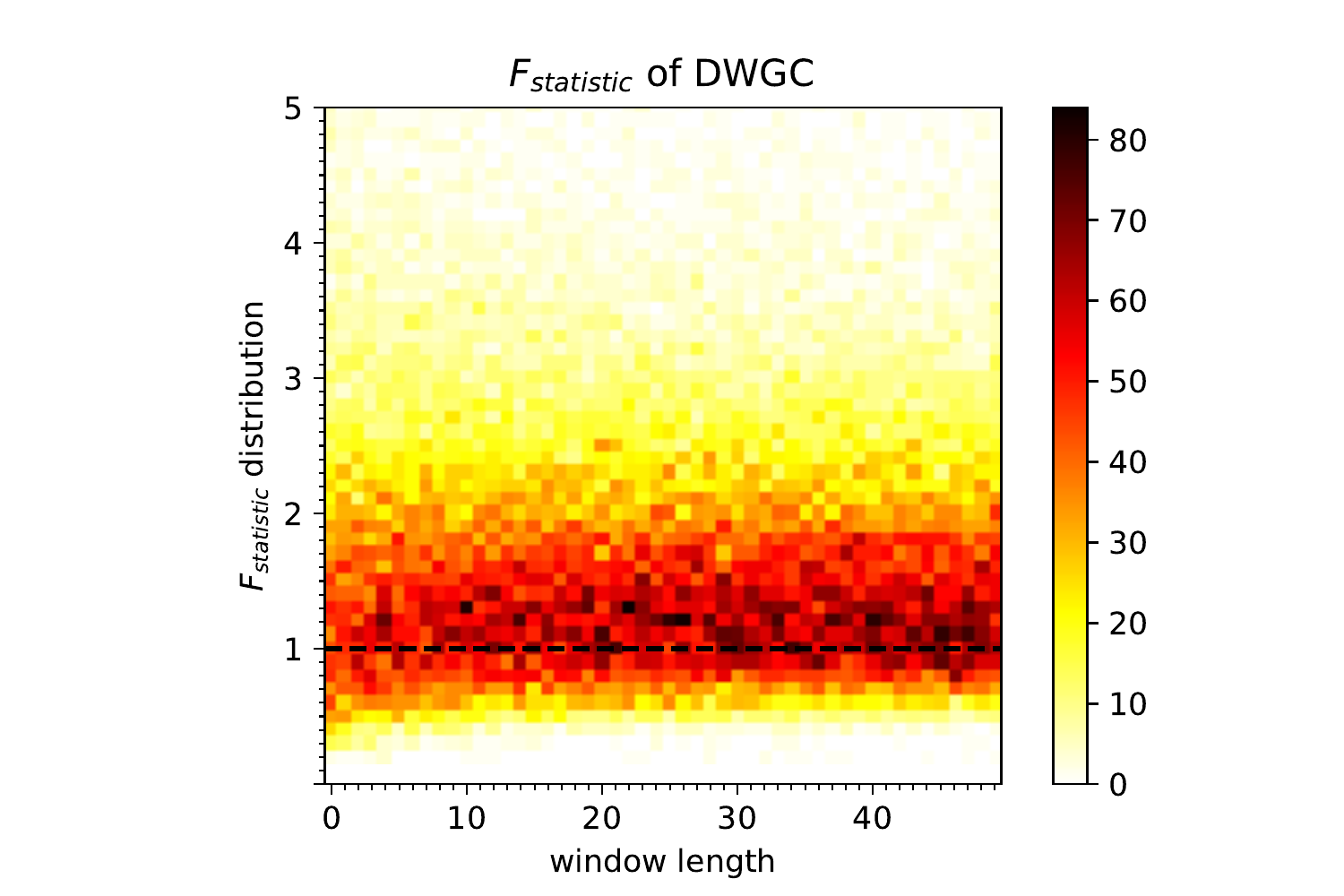}
		\end{minipage}
		\caption{The scatter heatmap of  $F_{statistic}$ results from the multiple tests. The tested data are the synthetic AR simulation data pairs with causalities. The causality analysis methods are naive DWGC without causal indexing (left) and DWGC with causal indexing (right). The black dashed line is the causality threshold $\epsilon=1$. The area above the black dashed line is causal and the area below is not.}
		\label{fig:f-statistic}
	\end{figure}

	\subsection{DWGC with causality indexing generate more accurate causality result}
	\noindent In this section, we give the sufficient condition of $\Phi$ to improve our DWGC method.
	\begin{theorem}
		\label{theorem:dwgc2}
		Certain causality-indexing $\Phi$ exists to improve the accuracy of our DWGC causality result on each window length k.
	\end{theorem}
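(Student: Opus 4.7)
The plan is to establish existence via an explicit perturbation argument around the naive DWGC (which corresponds to $\Phi=\mathbf{1}$, as in the initialization of Algorithm~\ref{alg:Framwork}), leveraging the form of the scaling function $h=\alpha-\tanh(\cdot)$ from Eqn.~\eqref{eqn:scale function}. First I would expand the reweighted F-statistic of Eqn.~\eqref{eqn:f-statistic-reweight}. Assuming the NAR forecaster retrained on the reweighted series $A=\Phi * Y$ is locally well-calibrated, the per-point squared error $(\hat{A}_{i,q}-A_{i,q})^2$ can be approximated by $(\Phi_q^i)^2(\hat{Y}_{i,q}-Y_{i,q})^2$ up to a controllable perturbation term. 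Substituting this into $L_1$ and $L_2$ and invoking the noise decomposition of Eqn.~\eqref{eqn:f-statistic-window} yields an expression for $F(\Phi)$ as a ratio of $(\Phi_q^i)^2$-weighted sums of prediction errors plus chi-square noise terms.

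Next I would identify which time points in the sliding window should be emphasized. For a causal pair, the points $q^{\ast}$ on which channel $j$ contributes meaningfully are precisely those with a large single-channel error $(\hat{Y}_{i,q}-Y_{i,q})^2$ (auto-correlation alone cannot predict) and a comparatively small two-channel error (the cross-channel signal helps). The KL objective in Eqn.~\eqref{eqn:model_loss} with $h=\alpha-\tanh(\cdot)$, $\alpha>1$, produces $\Phi_q^i\in(\alpha-1,\alpha)$ that is monotone in the single-channel error, so the reweighting acts as a soft selector favoring the causally informative points and suppressing the auto-correlation-dominated ones. A Chebyshev-type sum inequality applied to the comonotone orderings of the $L_1$- and $L_2$-increments then shows that this choice of $\Phi$ strictly enlarges the ratio $L_1/L_2$ compared with uniform weighting, which increases $P(F(\Phi)>\epsilon)$ on causal windows.

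To turn this into a formal existence statement for each fixed window length $k$, I would consider the one-parameter family $\Phi_\alpha$ induced by varying $\alpha$ in the scaling function and evaluate $\tfrac{d}{d\alpha}\mathbb{E}[F(\Phi_\alpha)]$ at a neighborhood of the baseline $\Phi=\mathbf{1}$. Showing this derivative is strictly positive at some admissible $\alpha_0$ (using the causal-pair assumption $F_0>\epsilon$, which forces the single-channel and two-channel errors to be unequal in expectation) produces a specific $\Phi_{\alpha_0}$ that strictly improves accuracy for that $k$, completing the existence claim.

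The main obstacle will be the feedback loop between $\Phi$ and the retrained NAR predictor: the reweighted errors $(\hat{A}-A)$ are not literally $\Phi*(\hat{Y}-Y)$, because the NAR model is refit on the reweighted data and its predictions depend nonlinearly on $\Phi$. I would handle this with a local perturbation analysis, assuming a Lipschitz/first-order Taylor expansion of the NAR forecaster around the baseline $\Phi=\mathbf{1}$, together with the regularization-on-$h$ refinement the paper defers to Appendix~\ref{proof-theorem-2}. This keeps the residual perturbation of $\hat{A}$ away from $\Phi*\hat{Y}$ uniformly small on each window and preserves the monotonicity conclusion driving the improvement.
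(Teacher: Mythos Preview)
Your approach is genuinely different from the paper's, and it has at least one soft spot worth flagging.

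\textbf{What the paper actually does.} The proof in Appendix~\ref{proof-theorem-2} does not perturb around $\Phi=\mathbf{1}$ at all. Instead it reuses the series-expansion machinery from the proof of Theorem~\ref{theorem:dwgc1}: it writes the density $f^{k,\Phi}_{F_{statistic}}$ of the reweighted F-statistic using Lemma~2 (the hypoexponential density for $\sum_t(\phi_t\gamma_t)^2$ with distinct variances $\phi_t^2$), expands it as $\sum_q\sum_i g_2(\epsilon,k,i,\phi_q)$, and then compares term-by-term with the unweighted expansion $\sum_i g(\epsilon,k,i)$ from~\eqref{27}. This yields an \emph{explicit algebraic sufficient condition} on $\Phi$ (inequality~\eqref{final-condition1}) under which $\int_1^\infty f^{k,\Phi}_{F_{statistic}}>\int_1^\infty f^k_{F_{statistic}}$; existence then reduces to checking the condition is feasible, which the paper does at the end of the appendix. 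The NAR retraining is handled not by a Lipschitz/Taylor expansion but via the Nash efficiency coefficient assumption in the preliminary subsection, which forces $\sum_t(\hat Y^\Phi_t-\phi_tY_t)^2=|E(\phi)|^2\sum_t(\hat Y_t-Y_t)^2$ exactly.

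\textbf{What each approach buys.} The paper's route is heavier computationally but produces a concrete inequality on the $\phi_q$'s; this is what justifies the regularization term~\eqref{condition_reg} added to the loss~\eqref{eqn:model_loss}. Your perturbation/derivative argument is lighter and in principle sufficient for a bare existence claim, but it cannot deliver that explicit condition, so it would leave the regularizer unmotivated.

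\textbf{The weak step in your argument.} Your Chebyshev step asserts that $h=\alpha-\tanh(\cdot)$ makes $\Phi_q$ ``favor the causally informative points.'' But $h$ is \emph{decreasing} in the single-channel squared error, so points with large $(\hat Y_{i,q}-Y_{i,q})^2$---precisely the ones you identify as causally informative---receive \emph{smaller} weights $\Phi_q$. Downweighting those points in both $L_1$ and $L_2$ does not obviously increase the ratio $L_1/L_2$; the comonotonicity you need for the Chebyshev sum inequality is not established, and the sign could go the wrong way. The paper's intuition (stated just after Theorem~\ref{theorem:dwgc2}) is about stabilizing $F_{statistic}$ by correlating the previously independent Gaussian terms, not about monotonically pushing up the mean ratio; your mechanism does not match that, and as written the inequality step is a gap rather than a different route to the same conclusion.
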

	
	We prove the above theorem by adopting the series expansion of $F_{statistic}$ in \eqref{theorem:dwgc1}, and take the increasing of each series after adding $\Phi$ as a sufficient condition to improve our DWGC effect.
	
	Existence of $\Phi$ can be intuitively illustrated as follows. The important reason why the Eqn.~\eqref{eqn:f-statistic-window} is unstable before adding $\Phi$ is that $(\hat{Y}_{it}-Y^{\text{real}}_t)$,$\gamma_t$ and $(\hat{Y}_{it|j}-Y^{\text{real}})$ are all independently distributed Gaussian variables, so it gives a unstable influence on $F_{statistic}$. However, with the causal indexing $\Phi$, an causal reweighting method is to assign specific weights to establish correlation between each item by $\Phi$. For example, when $(\hat{Y}_{it}-Y^{\text{real}}_t)$ is observed to be significantly out of the normal range, we give $\Phi$ to both scale down the fitting error and noise, so as to offset the negative effect of abnormal fitting values on time point t on the whole $F_{statistic}$.
	
	Besides, Eqn. \eqref{eqn:model_loss} give another view to explain the existence of $\Phi$. During the experiment, we can include this theoretical sufficient condition into the regularization term of loss function\eqref{eqn:model_loss} to help iterative optimization. In a word, our DWGC method's effect can be improved by $\Phi$ satisfying certain condition. Concrete form of this sufficient condition is shown in Appendix~\ref{proof-theorem-2}.
	
	\section{EXPERIMENTS}
	\noindent In this section, we present the empirical result comparisons of the dynamic window-level causality method.

	\begin{figure}
		\centering
		\begin{minipage}[c]{0.5\textwidth}
			\centering
			\includegraphics[height=4.5cm,width=6cm]{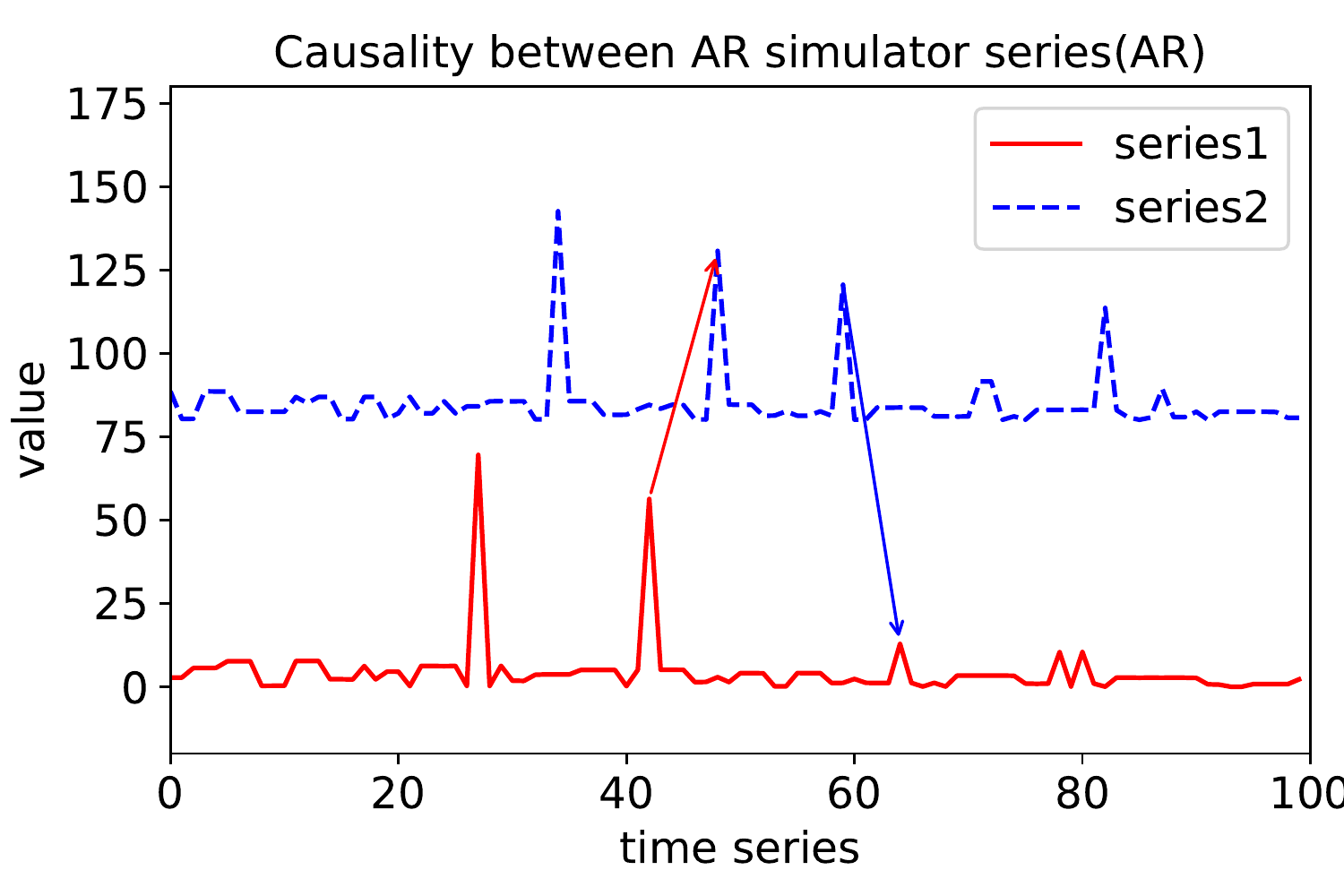}
		\end{minipage}%
		\begin{minipage}[c]{0.5\textwidth}
			\centering
			\includegraphics[height=4.5cm,width=6
			cm]{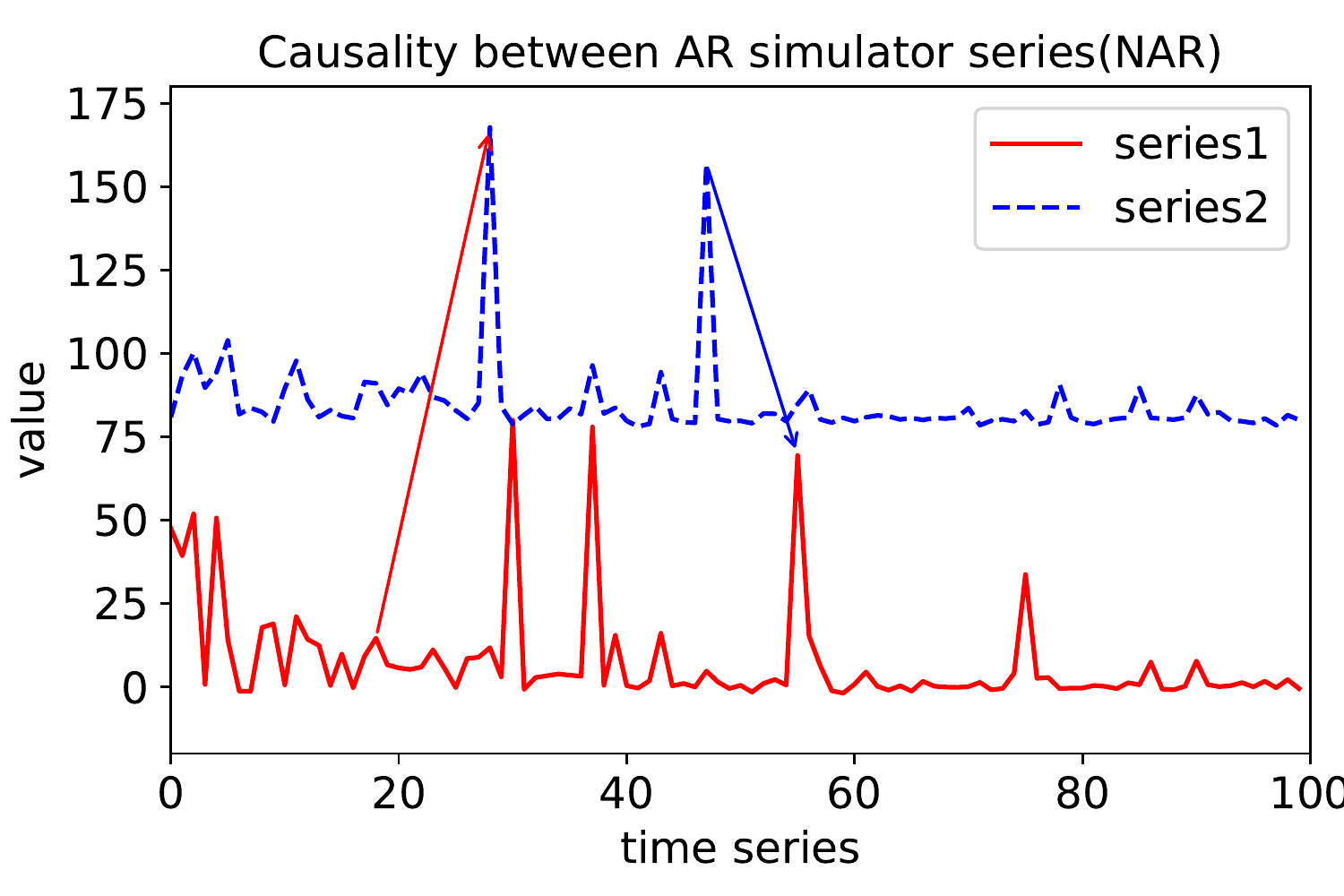}
		\end{minipage}
		\caption{The partial simulation data in AR/NAR experiment, the blue/red curve represents two channels of the time series, arrows represent causal relationships. The starting position of each arrow is at the point when m1(m2) takes an abnormally large value. In this partial display, some of mutations are caused by causal effects and some are abnormal noise. It is easy to confuse causal effects with noise by direct observation.}
		\label{fig:ar-data}
	\end{figure}

	\subsection{Results on Synthetic AR/NAR Simulation Dataset}
	\noindent We first construct the dataset using AR and NAR simulations. The construction details are as follows:
	1) The linear AR simulation construction:
	We simulate two linear AR time series with a random lag value randomly picked from one to nine. The initial value of $T_1,T_2$ is $ T_{i,t}=\left\{
	\begin{aligned}
	t & ~ & \ t \in [1,5] \\
	11-t & ~ & \ t \in [6,10]
	\end{aligned}
	\right. ,$ i=1,2.
	\begin{equation}
	\begin{aligned}
	&T_{1,t} = m_1 T_{2,t-lag}+0.02\mathcal{N}(0,1)\\
	&T_{2,t} = m_2 T_{1,t-lag}+0.02\mathcal{N}(0,1),\\
	&m_1,~m_2 = \begin{cases}
	0.9& \text{p=0.95}\\
	10& \text{p=0.05}
	\end{cases}
	\end{aligned}
	\end{equation}

	2) The non-linear AR simulation construction: we simulate two non-linear AR time series with a random lag value randomly picked from one to nine. The initial value of $T_1,T_2$ is $ T_{i,t}=\left\{
	\begin{aligned}
	t & ~ & \ t \in [1,5] \\
	11-t & ~ & \ t \in [6,10]
	\end{aligned}
	\right. ,$ i=1,2.
	\begin{equation}
	\begin{aligned}
	&T_{1,t} = m_1 \mathrm{Re}(\sqrt{{T_{2,t-lag}}^2-1})+N(0,1)\\
	&T_{2,t} = m_2 \mathrm{Re}(\sqrt{{T_{1,t-lag}}^2-1})+N(0,1)\\
	&T_{3,t} = \sin(0.1t),
	\end{aligned}
	\end{equation}
	where $ m_1=\left\{
	\begin{aligned}
	10 & ~ & \ T_{3,t}>0.9 \\
	0.9 & ~ & \ T_{3,t}<0.9
	\end{aligned}
	\right. ,
	m_2=\left\{
	\begin{aligned}
	10 & ~ & \ T_{3,t}<-0.9
	\\
	0.9 & ~ & \ T_{3,t}>-0.9
	\end{aligned}
	\right. $,~$lag \in \{1,2,3,...9\}$, and $\mathrm{Re(\cdot)}$ takes the real part of the square root.
	\subsubsection{Experimental Results}
	\noindent A fragment of the AR/NAR simulation data is shown in Fig.~\ref{fig:ar-data} (left: AR and right: NAR). The arrows means the lag relation between the two series and we consider them as the ground truth of the dynamic causalities.
	
	For the two AR/NAR simulation datasets, we first pre-process the data to make sure that the time series are stationary. The reweight function $g$ for causaling index (Eqn.~\eqref{eqn:scale function}) as $(\frac{6}{5}-\tanh(\cdot))$, $\epsilon = 1$. Step length is taken as the window length.

	\begin{table}[ht]
		\caption{Causality Recall of naive DWGC and DWGC(ours) on two AR/NAR simulation datasets}
		\centering
		\begin{tabular}{l|c|ccccc|}
			\toprule
			dataset             &\diagbox  {method}{window length}   & 10          & 20         & 30         & 100        \\ \midrule
			\multirow{2}{*}{AR simulations}  & naive DWGC & 0.48(0.06) & 0.49(0.05) & 0.77(0.03) & 0.77(0.02) \\ 
			& DWGC(ours)       & 0.72(0.03) & 0.72(0.05) & 0.80(0.03) & 0.88(0.05) \\ 
			\hline
			\multirow{2}{*}{NAR simulations} & naive DWGC & 0.59(0.06) & 0.60(0.06) & 0.73(0.06) & 0.87(0.03) \\
			& DWGC(ours)       & 0.84(0.05) & 0.90(0.02) & 0.84(0.08) & 0.88(0.02) \\ 
		\end{tabular}
		\label{table:ar-result}
	\end{table}
	
	\begin{table}[ht]
		\caption{Causality Accuracy of naive DWGC and DWGC(ours) on two AR/NAR simulation datasets}
		\centering
		\begin{tabular}{l|c|ccccc|}
			\toprule
			dataset &\diagbox  {method}{window length}   & 10          & 20         & 30         & 100        \\ \midrule
			\multirow{2}{*}{AR simulations}  & naive DWGC & 0.22(0.06) & 0.43(0.05) & 0.82(0.03) & 0.90(0.02) \\ 
			& DWGC(ours)   &0.23(0.03) & 0.45(0.05) & 0.84(0.03) & 0.92(0.05) \\ 
			\hline
			\multirow{2}{*}{NAR simulations} & naive DWGC & 0.42(0.07) & 0.76(0.09) & 0.93(0.06) & 1.00(0.04) \\
			& DWGC(ours)       & 0.44(0.05) & 0.79(0.04) & 0.94(0.05) & 1.00(0.03) \\ 
		\end{tabular}
		\label{table:ar-result2}
	\end{table}

	On each window, if we detect $F_{statsitic}>1$ with at least a set of causality pairs in this window, our causality extraction on this window is successful. In this case, we calculate the accuracy/recall of the naive DWGC and DWGC methods on the data points with causalities in Table \ref{table:ar-result}. As can be seen from the results, for the recall/accuracy rate, DWGC method performs better than naive DWGC method when the sliding window length is generally small. Besides, The recall rate of naive DWGC method increases with the sliding window length, which certifies the theoretical results.
	
	\subsection{Results on Climate Dataset for ElNiño}
	\noindent In this part, we verify the DWGC model on a real climate dataset, which has widely recognized seasonal causalities.
	
	\subsubsection{Academia Knowledge of ElNiño and monsoon}
	\noindent The climate academia already have the following definition on ElNiño: a climate phenomenon in the Pacific equatorial belt where the ocean and atmosphere interact with each other and lose their balance~\cite{ramage151971}. While monsoon generally refers to the seasonal conversion of atmospheric circulation and precipitation in tropical and subtropical areas, and two parameters are used to measure its strength: OLR(Outgoing Longwave Radiation) and MKE(Monsoon Kinetic Energy). 
	
	The causal interaction between ENSO and the east Asian monsoon has been extensively explored:
	\begin{enumerate}
		\item The causality $\text{ENSO} \Longrightarrow \text{MKE/OLR}$ and $\text{ENSO} \Longrightarrow \text{MKE/OLR}$  exists, especially in autumn and winter.
		\item The reverse causal effect $\text{MKE/OLR} \Longrightarrow \text{ENSO}$ and $\text{MKE/OLR} \Longrightarrow \text{ENSO}$ also exists in spring and summer, the strength change of the Asian monsoon will in turn trigger the formation of ENSO event. 
	\end{enumerate}
	
	However, in recent decades, the accuracy of this ENSO-based causal model seems to have a downward trend.  Therefore, the DWGC method would be helpful for analyzing the data with dynamic causalities like this.

	\begin{figure}
		\centering
		\begin{minipage}[c]{0.5\textwidth}
			\centering
			\includegraphics[height=4.5cm,width=6cm]{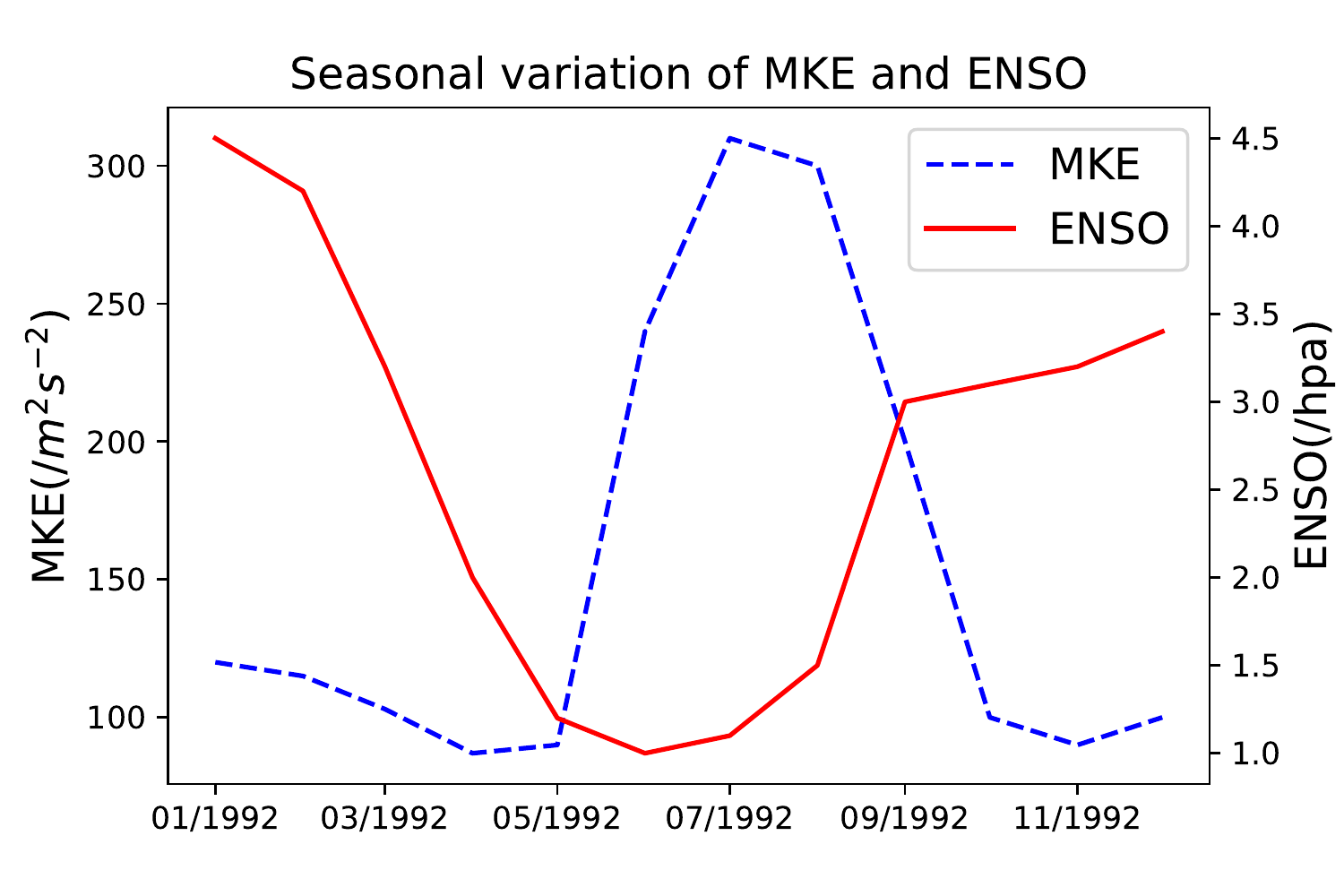}
		\end{minipage}%
		\begin{minipage}[c]{0.5\textwidth}
			\centering
			\includegraphics[height=4.5cm,width=6
			cm]{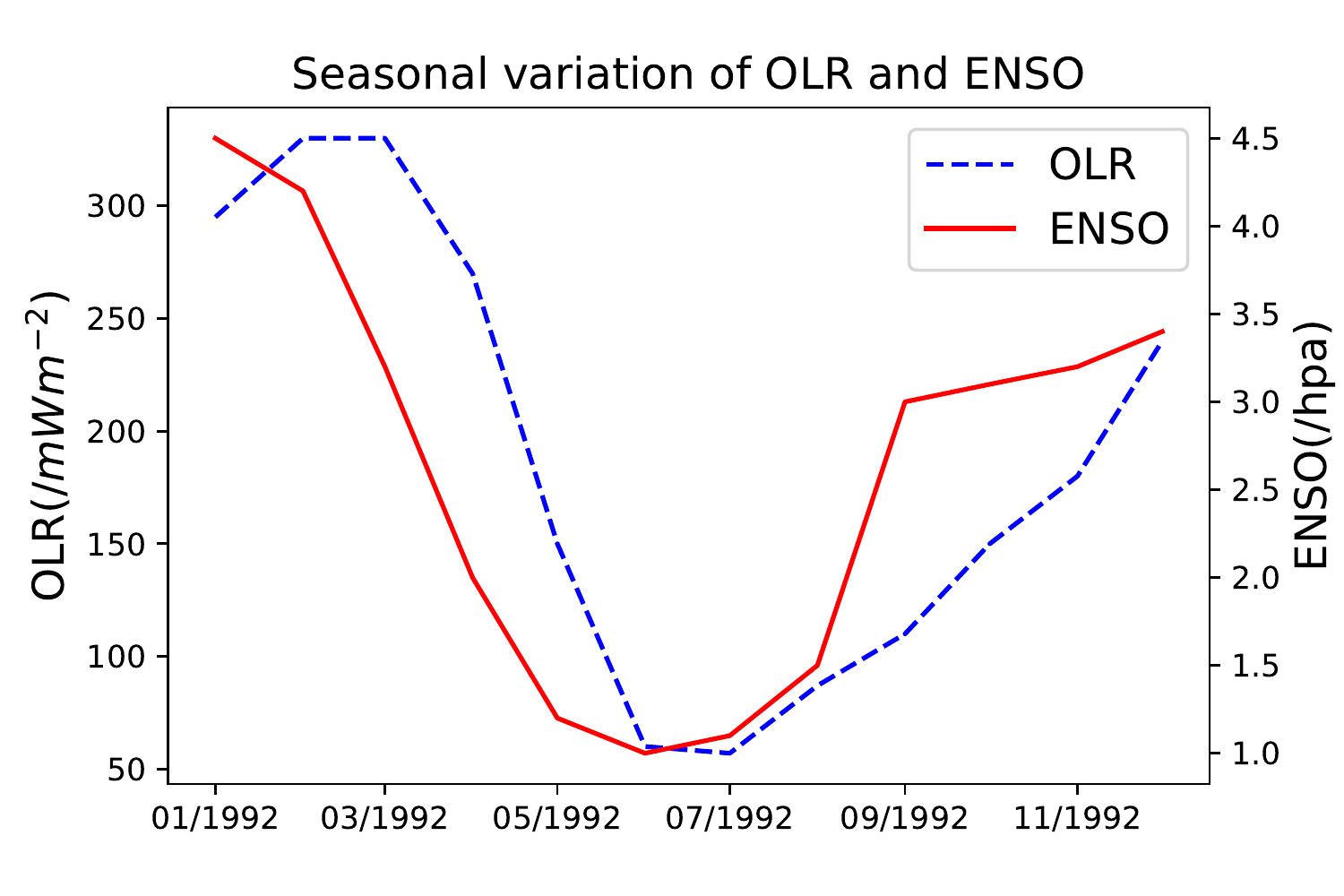}
		\end{minipage}
		\caption{Original data of MKE/OLR and ENSO.}
		\label{fig:raw-data-enso}
	\end{figure}

	\subsubsection{Experiment} 
	\noindent We used ENSO and Asian monsoon data in 1992 and the series trends can be seen from the part of the raw data in Fig.~\ref{fig:raw-data-enso}. We take the first four months as training data and the rest as the testing data~\cite{Yang2018Selective}. For the comparison with prior knowledge, we selected the window length as one month.
	
	\begin{figure}
		\centering
		\begin{minipage}[c]{0.5\textwidth}
			\centering
			\includegraphics[height=4.5cm,width=6cm]{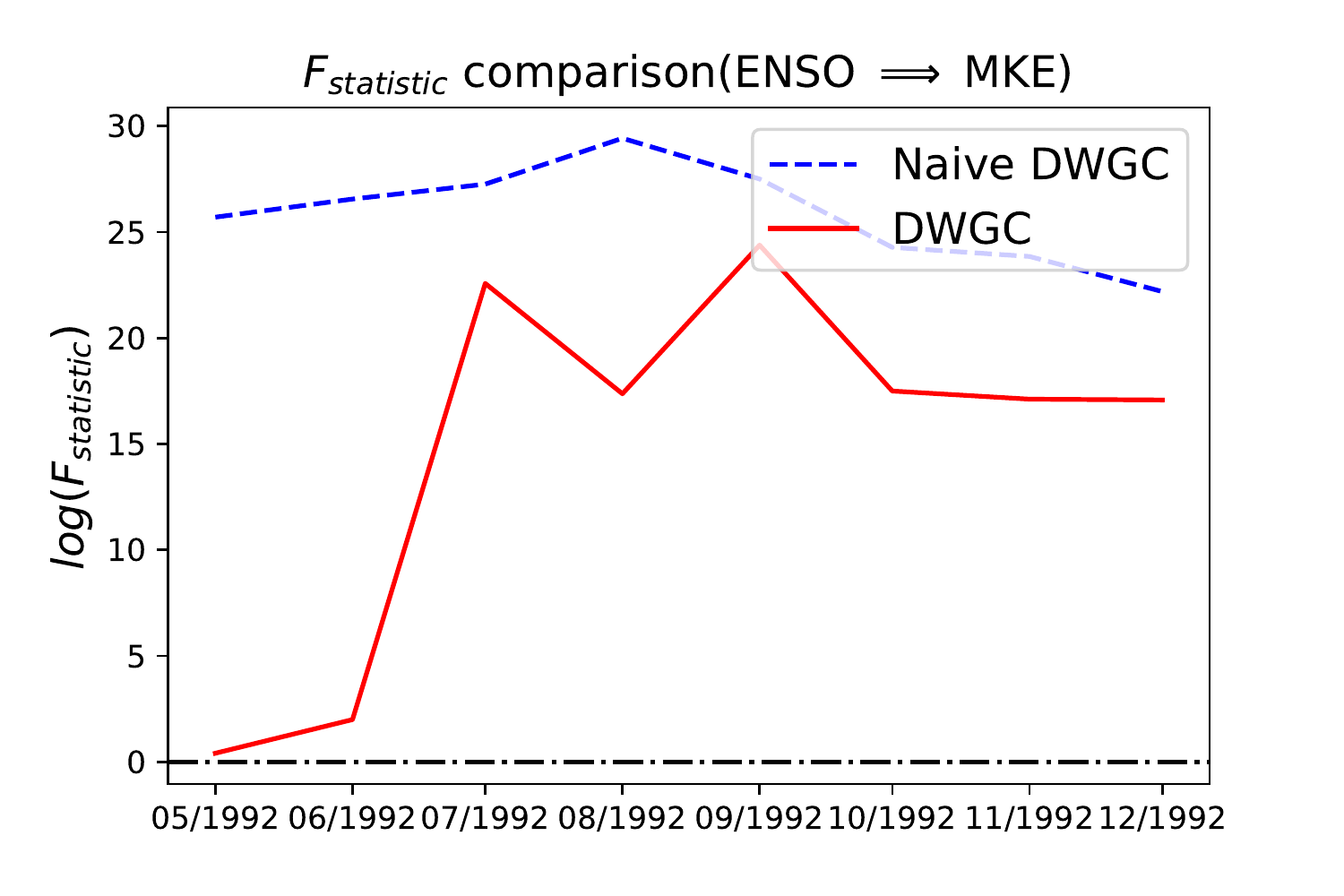}
		\end{minipage}%
		\begin{minipage}[c]{0.5\textwidth}
			\centering
			\includegraphics[height=4.5cm,width=6
			cm]{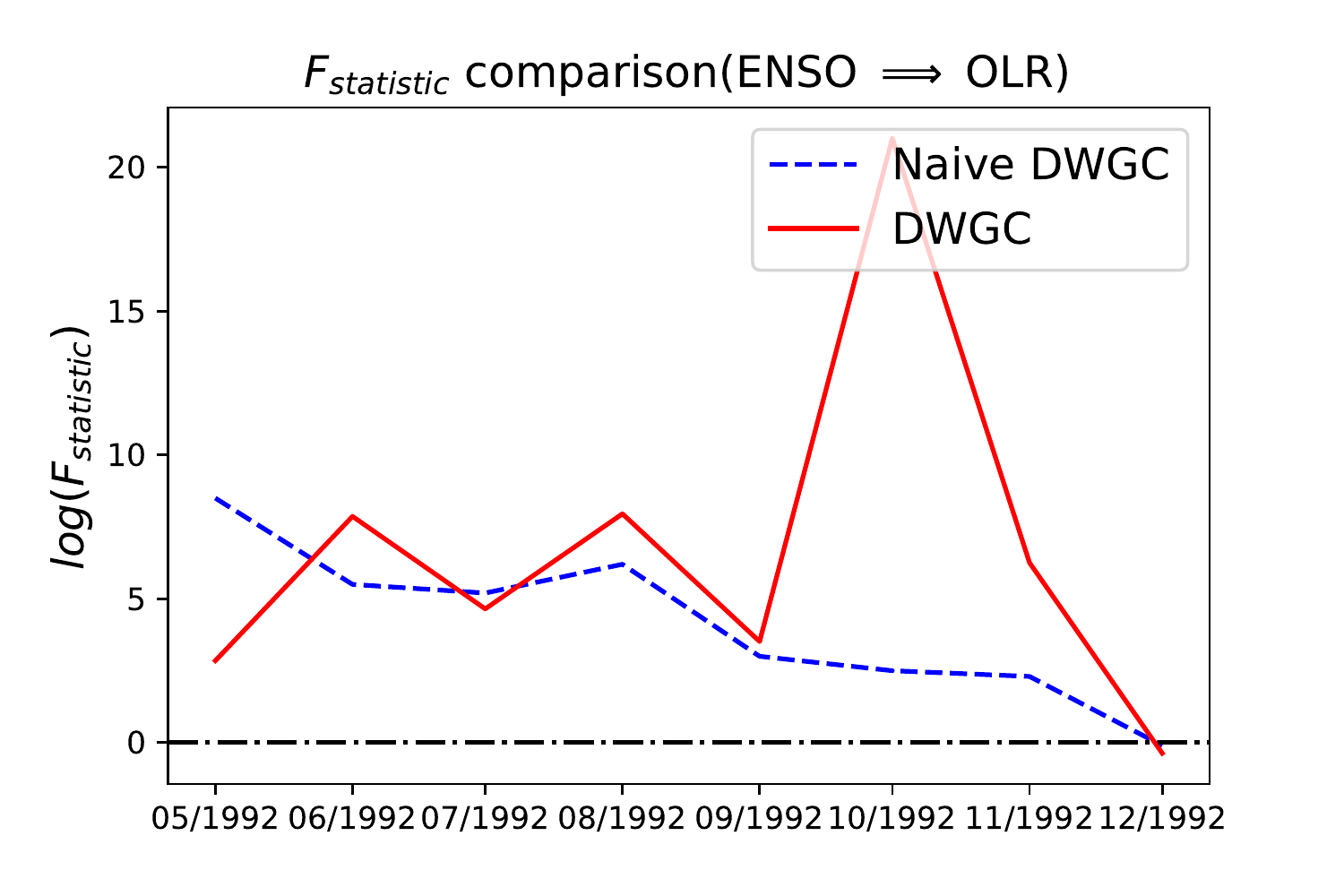}
		\end{minipage}
		\caption{$F_{statistic}$ from ENSO to MKE/OLR using GC(naive DWGC) and DWGC method. In GC, there is a causal relationship between ENSO and MKE \& OLR in most months, but there is no obvious trend of causal change on the whole. In our method DWGC, on the basis of detecting the causal relationship $\text{ENSO} \Longrightarrow \text{MKE}$ and $\text{ENSO} \Longrightarrow \text{OLR}$, we further find that both the causal relationship in autumn and winter(Oct-Dec) is obviously stronger than that in spring and summer(May-Aug)(confirms the priori meteorology conclusion\cite{Kumar1999On}).}
		\label{ENSOfigure1}
	\end{figure}
	\begin{figure}
		\centering
		\begin{minipage}[c]{0.5\textwidth}
			\centering
			\includegraphics[height=4.5cm,width=6cm]{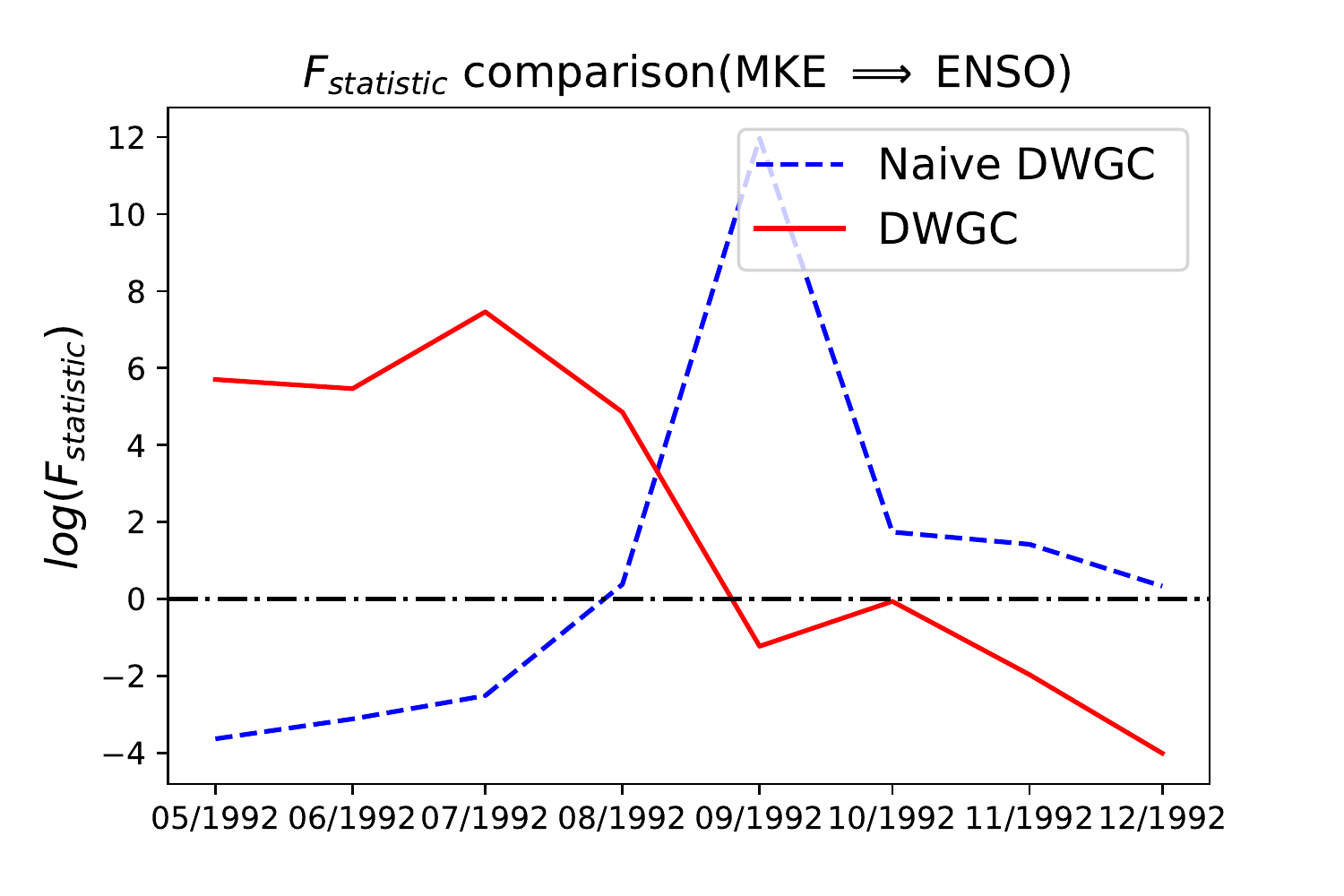}
		\end{minipage}%
		\begin{minipage}[c]{0.5\textwidth}
			\centering
			\includegraphics[height=4.5cm,width=6
			cm]{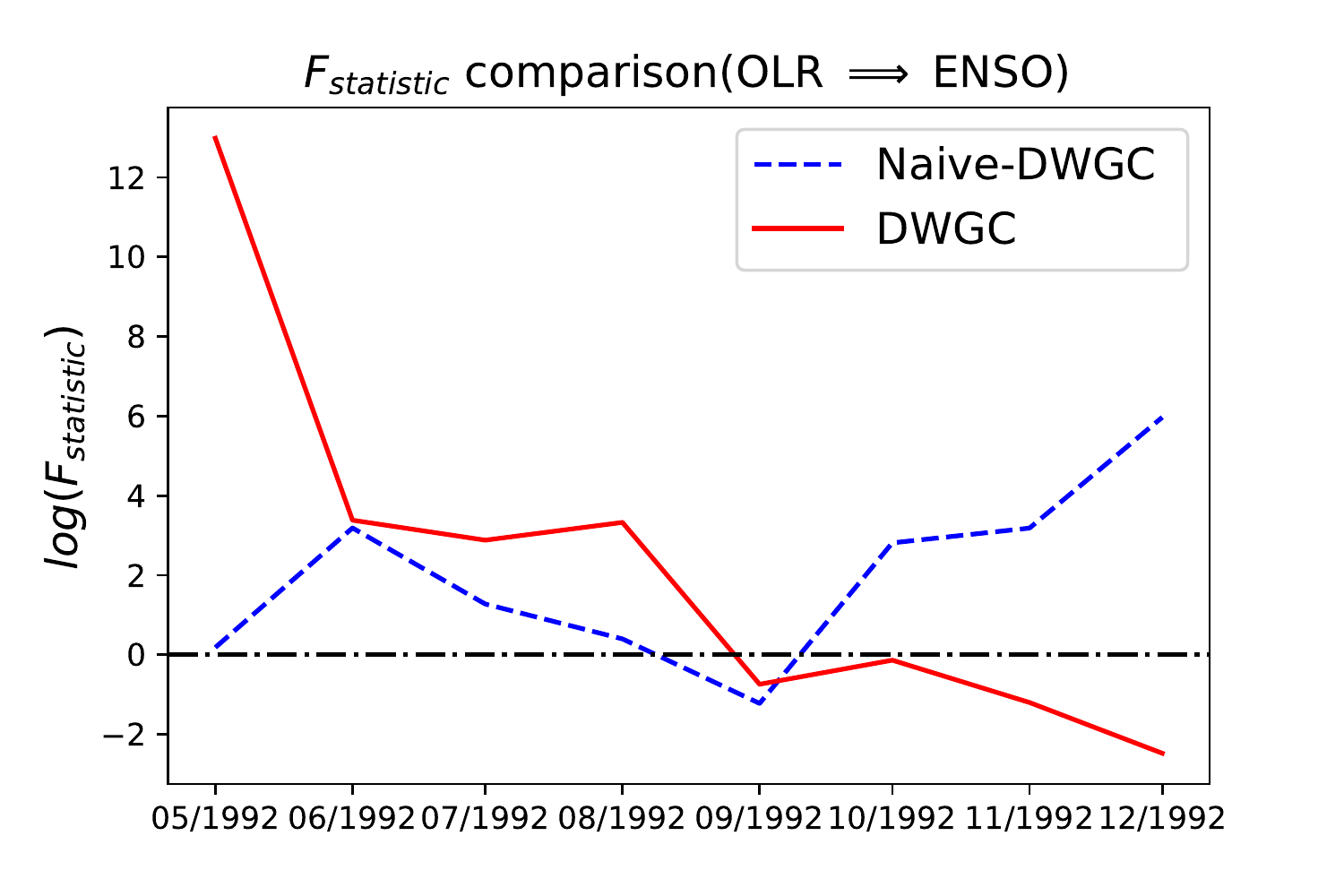}
		\end{minipage}
		\caption{$F_{statistic}$ from MKE/OLR to ENSO using GC(naive DWGC) \& DWGC method. Compared with GC method, DWGC can successfully detect reverse causality $ENSO \Longrightarrow MKE$ in spring and summer(May)(confirms the prior meteoroogy\cite{yasunari1990impact}).}
		\label{ENSOfigure2}
	\end{figure}

	We use our model(DWGC) to judge the window-level causalities between ENSO and Asian monsoon in every month. In Fig.~\ref{ENSOfigure1}, we show the F-statistic values of every month of DWGC and naive DWGC without causality indexing for two series (ENSO,MKE) and (ENSO, OLR). The x-axis is the month-time and the y-axis is the log-scale F-statistic value. The black dot-dash line with value zero is the F-statistic threshold.  For naive DWGC without causality indexing, the causality between ENSO and two parameters of east Asian monsoon(MKE,OLR) can be successfully detected, but the difference of causality between May-Aug(spring and summer) and Aug-Dec(autumn and winter) is not significant. However, in our DWGC method, after detecting the basic causal relationship of ENSO $\Longrightarrow$ MKE and ENSO $\Longrightarrow$ OLR, we further find that the causal relationship in autumn and winter is more significant than that in spring and summer in both of them with the larger F-statistic values. This significant causality variation detected by our DWGC is consistent with the academic knowledge\cite{Kumar1999On}.
	
	In Fig.~\ref{ENSOfigure2}, we show the F-statistic values of every month of DWGC and naive DWGC for two series (MKE,ENSO) and (OLR,ENSO), with the same figure axis and F-statistics threshold of Fig.~\ref{ENSOfigure1}. For naive DWGC without causality indexing, we detect the causality $\text{MKE} \Longrightarrow \text{ENSO}$ in Sep-Dec(autumn and winter) and the causality $\text{OLR} \Longrightarrow \text{ENSO}$ in May-Jul(spring and summer) and Oct-Dec(autum and winter). However, in our DWGC method, we detect the causality $\text{MKE} \Longrightarrow \text{ENSO}$ and $
	\text{OLR} \Longrightarrow \text{ENSO}$ in May-Aug(spring and summer). Compared to the naive DWGC method, our causalities results of DWGC are more close to the academia knowledge\cite{yasunari1990impact}.

	\section{Conclusions and Future Work}
	\noindent In this paper, a new task is proposed to detect the window-level dynamic causal relationship between the time series data. 
	By directly conducting the F-test on comparing the window level forecasting predictions with/without the cause channel, the naive DWGC can detect the window-level causalities. This naive DWGC is a special case of the traditional Granger method and is not accurate enough especially when the sliding window length is not large enough.
	We introduce a technique called ``causal indexing'' to reweight the original time series. The purpose of this technique is to decrease the effects of the auto-correlation noise and increase the cross-correlation causal effects. The improved DWGC method is proved to have better causality detection accuracies. causalities detection accuracies.
	As far as we know, this paper is the first to propose and solve the new task of dynamic Granger causal relationship detection at the window level.
	In the experiments on two synthetic and one real datasets, we show that our DWGC method outperforms the traditional GC and the naive DWGC in sense of the causality detection accuracies.
	
	The dynamic causalities detected by DWGC is restricted to the same sliding window of every time series. In the future, we are interested in detecting the dynamic causalities without prior knowledge of the sliding window length and between the different sliding windows, and further analyzing the effect of step length for the dynamic causality detection.

	\section*{Acknowledgement}
	This work is done when the authors were working at RealAI.

	\bibliography{references} 

\begin{thebibliography}{10}

\bibitem{hamilton1994time}
James~D Hamilton.
\newblock {\em Time series analysis}, volume~2.
\newblock Princeton New Jersey, 1994.

\bibitem{scharf1991statistical}
Louis~L Scharf.
\newblock {\em Statistical signal processing}, volume~98.
\newblock Addison-Wesley Reading, MA, 1991.

\bibitem{granger2014forecasting}
Clive William~John Granger and Paul Newbold.
\newblock {\em Forecasting economic time series}.
\newblock Academic Press, 2014.

\bibitem{box2015time}
George~EP Box, Gwilym~M Jenkins, Gregory~C Reinsel, and Greta~M Ljung.
\newblock {\em Time series analysis: forecasting and control}.
\newblock John Wiley \& Sons, 2015.

\bibitem{keogh2003need}
Eamonn Keogh and Shruti Kasetty.
\newblock On the need for time series data mining benchmarks: a survey and
  empirical demonstration.
\newblock {\em Data Mining and knowledge discovery}, 7(4):349--371, 2003.

\bibitem{pearl2018theoretical}
Judea Pearl.
\newblock Theoretical impediments to machine learning with seven sparks from
  the causal revolution.
\newblock {\em arXiv preprint arXiv:1801.04016}, 2018.

\bibitem{eichler2013causal}
Michael Eichler.
\newblock Causal inference with multiple time series: principles and problems.
\newblock {\em Philosophical Transactions of the Royal Society A: Mathematical,
  Physical and Engineering Sciences}, 371(1997):20110613, 2013.

\bibitem{granger1969investigating}
Clive~WJ Granger.
\newblock Investigating causal relations by econometric models and
  cross-spectral methods.
\newblock {\em Econometrica: journal of the Econometric Society}, pages
  424--438, 1969.

\bibitem{lynggarrd1993dynamic}
Helle Lynggarrd and Kirsten~Honor{\'e} Walther.
\newblock {\em Dynamic Modelling with Mixed Graphical Association Models:
  Master's Thesis}.
\newblock Aalborg University, Institute for Electronic Systems, Department
  of~…, 1993.

\bibitem{lauritzen1989graphical}
Steffen~Lilholt Lauritzen and Nanny Wermuth.
\newblock Graphical models for associations between variables, some of which
  are qualitative and some quantitative.
\newblock {\em The annals of Statistics}, pages 31--57, 1989.

\bibitem{frydenberg1990chain}
Morten Frydenberg.
\newblock The chain graph markov property.
\newblock {\em Scandinavian Journal of Statistics}, pages 333--353, 1990.

\bibitem{pearl1995probabilistic}
Judea Pearl and James~M Robins.
\newblock Probabilistic evaluation of sequential plans from causal models with
  hidden variables.
\newblock In {\em UAI}, volume~95, pages 444--453. Citeseer, 1995.

\bibitem{dahlhaus2003causality}
Rainer Dahlhaus and Michael Eichler.
\newblock Causality and graphical models in time series analysis.
\newblock {\em Oxford Statistical Science Series}, pages 115--137, 2003.

\bibitem{andersson2001alternative}
Steen~A Andersson, David Madigan, and Michael~D Perlman.
\newblock Alternative markov properties for chain graphs.
\newblock {\em Scandinavian journal of statistics}, 28(1):33--85, 2001.

\bibitem{eichler2012graphical}
Michael Eichler.
\newblock Graphical modelling of multivariate time series.
\newblock {\em Probability Theory and Related Fields}, 153(1-2):233--268, 2012.

\bibitem{chen2004analyzing}
Yonghong Chen, Govindan Rangarajan, Jianfeng Feng, and Mingzhou Ding.
\newblock Analyzing multiple nonlinear time series with extended granger
  causality.
\newblock {\em Physics Letters A}, 324(1):26--35, 2004.

\bibitem{sun2008assessing}
Xiaohai Sun.
\newblock Assessing nonlinear granger causality from multivariate time series.
\newblock In {\em Joint European Conference on Machine Learning and Knowledge
  Discovery in Databases}, pages 440--455. Springer, 2008.

\bibitem{GRANGER1980329}
C.W.J. Granger.
\newblock Testing for causality: A personal viewpoint.
\newblock {\em Journal of Economic Dynamics and Control}, 2:329 -- 352, 1980.

\bibitem{Beauchamp1981Hume}
Tom~L Beauchamp and Alexander Rosenberg.
\newblock {\em Hume and the Problem of Causation}.
\newblock Oxford University Press,, 1981.

\bibitem{FRISTON20031273}
K.J. Friston, L.~Harrison, and W.~Penny.
\newblock Dynamic causal modelling.
\newblock {\em NeuroImage}, 19(4):1273 -- 1302, 2003.

\bibitem{10.5555/2074158.2074209}
Judea Pearl and James Robins.
\newblock Probabilistic evaluation of sequential plans from causal models with
  hidden variables.
\newblock In {\em Proceedings of the Eleventh Conference on Uncertainty in
  Artificial Intelligence}, UAI’95, page 444–453, San Francisco, CA, USA,
  1995. Morgan Kaufmann Publishers Inc.

\bibitem{tank2018neural}
Alex Tank, Ian Covert, Nicholas Foti, Ali Shojaie, and Emily Fox.
\newblock Neural granger causality for nonlinear time series.
\newblock {\em arXiv preprint arXiv:1802.05842}, 2018.

\bibitem{jones1978nonlinear}
David~Alan Jones and David~Roxbee Cox.
\newblock Nonlinear autoregressive processes.
\newblock {\em Proceedings of the Royal Society of London. A. Mathematical and
  Physical Sciences}, 360(1700):71--95, 1978.

\bibitem{chivukula2018discovering}
Aneesh~Sreevallabh Chivukula, Jun Li, and Wei Liu.
\newblock Discovering granger-causal features from deep learning networks.
\newblock In {\em Australasian Joint Conference on Artificial Intelligence},
  pages 692--705. Springer, 2018.

\bibitem{xu2019scalable}
Chenxiao Xu, Hao Huang, and Shinjae Yoo.
\newblock Scalable causal graph learning through a deep neural network.
\newblock In {\em Proceedings of the 28th ACM International Conference on
  Information and Knowledge Management}, pages 1853--1862. ACM, 2019.

\bibitem{duggento2019echo}
Andrea Duggento, Maria Guerrisi, and Nicola Toschi.
\newblock Echo state network models for nonlinear granger causality.
\newblock {\em bioRxiv}, page 651679, 2019.

\bibitem{he2019causalbg}
Miao He, Weixi Gu, Ying Kong, Lin Zhang, Costas~J Spanos, and Khalid~M Mosalam.
\newblock Causalbg: Causal recurrent neural network for the blood glucose
  inference with iot platform.
\newblock {\em IEEE Internet of Things Journal}, 2019.

\bibitem{cekic2018time}
Sezen Cekic, Didier Grandjean, and Olivier Renaud.
\newblock Time, frequency, and time‐varying granger‐causality measures in
  neuroscience.
\newblock {\em Statistics in Medicine}, 37(11):1910--1931, 2018.

\bibitem{PAGNOTTA2018478}
Mattia~F. Pagnotta, Mukesh Dhamala, and Gijs Plomp.
\newblock Benchmarking nonparametric granger causality: Robustness against
  downsampling and influence of spectral decomposition parameters.
\newblock {\em NeuroImage}, 183:478 -- 494, 2018.

\bibitem{lutkepohl2005new}
Helmut L{\"u}tkepohl.
\newblock {\em New introduction to multiple time series analysis}.
\newblock Springer Science \& Business Media, 2005.

\bibitem{sommerlade2012inference}
Linda Sommerlade, Marco Thiel, Bettina Platt, Andrea Plano, Gernot Riedel,
  Celso Grebogi, Jens Timmer, and Bj{\"o}rn Schelter.
\newblock Inference of granger causal time-dependent influences in noisy
  multivariate time series.
\newblock {\em Journal of neuroscience methods}, 203(1):173--185, 2012.

\bibitem{brodersen2015inferring}
Kay~H Brodersen, Fabian Gallusser, Jim Koehler, Nicolas Remy, Steven~L Scott,
  et~al.
\newblock Inferring causal impact using bayesian structural time-series models.
\newblock {\em The Annals of Applied Statistics}, 9(1):247--274, 2015.

\bibitem{Aneesh2018Discovering}
Aneesh~Sreevallabh Chivukula, Jun Li, and Wei Liu.
\newblock Discovering granger-causal features from deep learning networks.
\newblock In {\em Australasian Joint Conference on Artificial Intelligence},
  2018.

\bibitem{ramage151971}
CS~Ramage.
\newblock guillemotright monsoon meteorology.
\newblock {\em International Geophysics Series}, 15, 1971.

\bibitem{Yang2018Selective}
Song Yang, Kaiqiang Deng, and Wansuo Duan.
\newblock Selective interaction between monsoon and enso: Effects of annual
  cycle and spring predictability barrier.
\newblock {\em Chinese Journal of Atmospheric Sciences}, 2018.

\bibitem{Kumar1999On}
K~Krishna Kumar, Balaji Rajagopalan, and Mark~A Cane.
\newblock On the weakening relationship between the indian monsoon and enso.
\newblock {\em Science}, 284(5423):2156--2159, 1999.

\bibitem{yasunari1990impact}
T~Yasunari.
\newblock Impact of indian monsoon on the coupled atmosphere/ocean system in
  the tropical pacific.
\newblock {\em Meteorology and Atmospheric Physics}, 44(1-4):29--41, 1990.

\bibitem{nash1970river}
J~Eamonn Nash and Jonh~V Sutcliffe.
\newblock River flow forecasting through conceptual models part i—a
  discussion of principles.
\newblock {\em Journal of hydrology}, 10(3):282--290, 1970.

\end{thebibliography}
	\bibliographystyle{unsrt}
	
	\newpage
	\appendix

	\section{The Simplification of F-test result}\label{eq:lemma 1}
	\begin{lemma}
		If k is large enough, $\sigma_0 = \sigma_0(\hat Y_{t}^{}-Y_{t}^{real})$, we have the probability density function: 
		\begin{equation}
		f_{\frac{{\sum_{t}^{k}\gamma_{t}(\hat Y^{}_{t}-Y^{real}_{t})}}{\sum_{t}^{k}\gamma_t^2}}(x) = O(e^{-\frac{e^6 k}{2(\sigma_0^2+1)}x^2})
		\end{equation}
	\end{lemma}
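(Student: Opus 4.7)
The plan is to analyze the random variable
$$X \;=\; \frac{\sum_{t=1}^{k} \gamma_t\,(\hat Y_t - Y_t^{\text{real}})}{\sum_{t=1}^{k} \gamma_t^2}$$
by conditioning on the noise sequence $\{\gamma_t\}$ and exploiting the independence (and Gaussianity) between the $\gamma_t$ and the fitting errors $\epsilon_t := \hat Y_t - Y_t^{\text{real}}$, which by hypothesis have standard deviation $\sigma_0$. The goal is to recognize the expression as a ratio whose tails are controlled by a Gaussian in the numerator and a concentrating chi-square in the denominator.

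First, I would write $N := \sum_t \gamma_t\,\epsilon_t$ and $D := \sum_t \gamma_t^2$. Since $\epsilon_t$ is independent of the vector $\gamma$, conditioning on $\gamma$ makes $N$ a mean-zero Gaussian with variance $\sigma_0^2 D$; hence $X \mid \gamma \sim \mathcal{N}(0,\sigma_0^2/D)$. The unconditional density is then
$$f_X(x) \;=\; \mathbb{E}_{\gamma}\!\left[\,\frac{1}{\sigma_0}\sqrt{\frac{D}{2\pi}}\,\exp\!\left(-\frac{D\,x^2}{2\sigma_0^2}\right)\right],$$
which, because the integrand depends on $\gamma$ only through $D$, collapses to a one-dimensional integral against the $\chi^2(k)$ density of $D$.

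Second, I would invoke the concentration of $D \sim \chi^2(k)$ about its mean $k$. The Laurent--Massart bound guarantees that for any fixed $\delta\in(0,1)$, $\Pr(D \notin [(1-\delta)k,(1+\delta)k])$ is exponentially small in $k$. On the good event I would lower-bound $D$ by $(1-\delta)k$ inside the exponent and upper-bound $\sqrt{D}$ by $\sqrt{(1+\delta)k}$ in the prefactor, producing an envelope of the shape $\sqrt{k}\,\exp\!\bigl(-(1-\delta)k x^2/(2\sigma_0^2)\bigr)$; on the complementary event the chi-square tail itself supplies an $e^{-\Omega(k)}$ factor that can be merged into the Gaussian envelope without enlarging its shape constant.

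Finally, to arrive at the precise stated form $O\!\bigl(\exp\!\bigl(-\tfrac{e^6 k}{2(\sigma_0^2+1)}x^2\bigr)\bigr)$, I would absorb the polynomial prefactor $\sqrt{k}$ into the exponential at the modest cost of enlarging the effective variance from $\sigma_0^2/(1-\delta)$ to $\sigma_0^2+1$, and then tune $\delta$ so that the residual multiplicative slack from both the chi-square tail and the prefactor collapses into the numerical constant $e^6$ in the exponent. The main obstacle is precisely this constant-tracking: the $e^6$ and the additive $+1$ are not intrinsic to the underlying Gaussian density but artefacts of simultaneously absorbing the $\sqrt{k}$ prefactor, the chi-square concentration slack, and a one-sided Taylor overestimate into a single clean Gaussian envelope. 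Calibrating $\delta$ against these three sources of loss so that they close up into exactly the stated envelope is the heart of the argument, whereas the Gaussian conditioning in step one and the Laurent--Massart tail in step two are standard.
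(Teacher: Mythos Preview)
Your approach is sound but takes a genuinely different route from the paper. The paper does \emph{not} condition on $\gamma$. Instead it approximates the numerator $s_1=\sum_t \gamma_t(\hat Y_t-Y_t^{\mathrm{real}})$ directly as a marginal Gaussian with variance $k(\sigma_0^2+1)$, treats $s_1$ and $s_2=\sum_t\gamma_t^2$ as independent, and computes the density of $s_1/s_2$ by expanding $\exp\!\bigl(-t^2x^2/[2k(\sigma_0^2+1)]\bigr)$ as a power series in $t$, integrating each term against the $\chi^2(k)$ density to obtain a sum of Gamma ratios $\tau(2n+\tfrac{k}{2})/\tau(\tfrac{k}{2})$, and then resumming via Stirling-type asymptotics. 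The constants $e^6$ and $\sigma_0^2+1$ are precisely the residue of that marginal-variance choice and the Gamma asymptotics; they are not produced by any concentration argument.

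Your conditioning step is in fact cleaner: it keeps $N$ exactly Gaussian given $\gamma$ and correctly respects the dependence between $N$ and $D$ that the paper's independence approximation discards. The Laurent--Massart bound then delivers an envelope of the shape $\sqrt{k}\,\exp\!\bigl(-c\,kx^2/\sigma_0^2\bigr)$ directly. What you will \emph{not} be able to do is recover the specific pair $(e^6,\,\sigma_0^2+1)$ by ``tuning $\delta$'' against the concentration slack and the prefactor, as your final paragraph proposes: those numbers come from the paper's series manipulation, and your route naturally yields a rate proportional to $k/\sigma_0^2$ with a different numerical constant. Since the lemma is invoked only to justify dropping the cross term in the $F$-statistic, any constant of that order suffices downstream, so your argument establishes what is actually needed even though it will not reproduce the stated constants verbatim.
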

	\begin{proof}
		Due to that the sum and product of Gaussian variables still satisfy Gaussian distribution N(0,$\sigma_0$)(ignore the systematic error). We have:
		\begin{equation}
		\begin{aligned}
		s_1 = \sum_{i=1}^{k}\gamma_{i} (\hat Y_{}-Y_{real}) &\sim \frac{1}{\sqrt{2\pi k(\sigma_0^2+1)}}e^{-\frac{s_1^2}{2k(\sigma_0)^2+1)}}\\
		s_2 = \sum_{i=1}^{k}\gamma_i^2 &\sim \frac{1}{2^{\frac{k}{2}}}e^{-\frac{s_2}{2}}s_2^{\frac{k}{2}-1}\\
		s3 = \frac{ \sum_{i=1}^{k}\gamma_{i} (\hat Y_{}-Y_{real}) }{\sum_{i=1}^{k}\gamma_i^2} &\sim \int_{0}^{+\infty}\frac{1}{2^{\frac{k}{2}}\tau(\frac{k}{2})\sqrt{2\pi k (\sigma_0^2+1)}}e^{-\frac{t^2 s_3^2}{2k(\sigma_0^2+1)}-\frac{t}{2}}t^{\frac{k}{2}-1}dt
		\end{aligned}
		\end{equation}
		So

		\begin{equation}
		\begin{aligned}
		f(s_3) &=\int_{0}^{+\infty} \sum_{n=0}^{+\infty} \frac{1}{2^{\frac{k}{2}}\tau(\frac{k}{2})\sqrt{2\pi k (\sigma_0^2+1)}}\frac{(-\frac{t^2 x^2}{2k(\sigma_0^2+1)})^n}{n!} e^{-\frac{k}{2}} t^{\frac{k}{2}-1}  dt\\
		&=\sum_{i=0}^{+\infty}\frac{(-1)^n x^{2n}}{2^{\frac{k}{2}} \tau(\frac{k}{2})\sqrt{2\pi k (\sigma_0^2+1)} n!2^n k^n (\sigma_0^2+1)^n} e^{-\frac{t}{2}} t^{\frac{k}{2}-1+2n} dt\\
		&=\sum_{i=0}^{+\infty}\frac{(-1)^n 2^n x^{2n} \tau(2n+\frac{k}{2})}{\tau(\frac{k}{2})\sqrt{2\pi k (\sigma_0^2+1)}n!k^n (\sigma_0^2+1)^n}\\    &\sim \frac{1}{\sqrt{2\pi (\sigma_0^2+1)}} \sum_{n=0}^{+\infty}\frac{(-\frac{e^6 k}{2(\sigma_0^2+1)})^n x^{2n})}{n!}\\
		&\sim \frac{1}{\sqrt{2\pi (\sigma_0^2+1)}} e^{-\frac{e^6 k}{2(\sigma_0^2+1)}x^2}
		\end{aligned}
		\end{equation}
		Its convergence rate is significantly faster than that of the Gaussian variable in the same form(the rate is $e^6 k^2$), which can be ignored as it tends to zero in most cases.
	\end{proof}
	\section{Proof of Theorem 1}\label{proof of theorem 1}
	\subsection{Preliminary}
	Nash efficiency coefficient\cite{nash1970river} is often used to evaluate the performance of simulation prediction, which is expressed as:
	$$Nash = 1-\frac{\sum_{t=1}^{n}(\hat Y^{}_t-Y_t)^2}{\sum_{t=1}^{n}(Y_t-E(Y_t))^2},Nash_{\Phi} = 1-\frac{\sum_{t=1}^{n}(\hat Y^{\Phi}_t-\phi Y_t)^2}{|E(\phi)|^2\sum_{i=1}^{n}(Y_t-E(Y_t))^2}$$
	if out model's Nash efficiency coefficient is stable, that is: $Nash = Nash_{\Phi}$, we can get:
	$$\sum_{t=1}^{n}(\hat Y^{\phi}_t(\hat Y^{\phi}_{t|j})-\Phi Y_t)^2 = |E(\phi)|^2 \sum_{t=1}^{n}(\hat Y_t^{}(\hat Y^{}_{t|j})-Y_t)^2,$$
	\subsection{}
	\begin{proof}
		\begin{equation}
		f_{F_{statistic}}^k (\epsilon) = \sum_{i=0}^{\frac{k}{2}-1}(\frac{k}{2\sigma \sqrt{\epsilon_0}})^k \frac{C_{\frac{k}{2}-1}^{i}(\epsilon-1)^i\epsilon^{\frac{k}{2}-1-i}   \frac{(\frac{k}{2})^{\frac{k}{2}}\tau(\frac{k}{2}+i)}{\tau(\frac{k}{2})[\frac{k}{2}(1+\frac{\epsilon-1}{\epsilon_0 \sigma})]^{\frac{k}{2}+i}}  (\frac{2\epsilon_0 \sigma}{k(\epsilon+\epsilon_0)})^{k-1-i}\tau(k-1-i)}{\tau(\frac{k}{2})^2}\label{27}
		\end{equation}
		Let's call each series $g(\epsilon,k,i)$, so $f_{F_{statistic}}^{k} = \sum_{i=0}^{\frac{k}{2}-1}g(\epsilon,k,i)$. We recursively prove that $f_{F_{statistic}}^{k}$ is a monotonic function with respect to k(when k is greater than a certain value,$\epsilon>1$):
		\begin{equation}
		\begin{aligned}
		\frac{d (ln(g(\epsilon,k,i)))}{dk} = &lnk+1-ln(2\sigma \sqrt{\epsilon_0})-\digamma(\frac{k}{2})-\frac{1}{2}\digamma(\frac{k}{2}-i)+\frac{1}{2}ln\epsilon+\frac{1}{2}\digamma(\frac{k}{2}+i)-\\&\frac{1}{2}ln(1+\frac{\epsilon-1}{\epsilon_0 \sigma})  -\frac{i}{k}+ln(\frac{2\epsilon_0 \sigma}{\epsilon+\epsilon_0})+\digamma(k-1-i)-lnk-\frac{k-1-i}{k}\\
		=&ln(\frac{\sqrt{\epsilon_0}  \sqrt{\epsilon}}{(\epsilon+\epsilon_0)\sqrt{1+\frac{\epsilon-1}{\epsilon_0 \sigma}}})+\frac{1}{k}+ln(\frac{2\sqrt{\frac{k}{2}+i}}{\sqrt{\frac{k}{2}-i}}\frac{(k-1-i)}{k})
		\end{aligned}
		\end{equation}
		For each $\epsilon >1$, if we can find $\alpha$, $i \in (\alpha k-1, \frac{1}{2}k-1)$, we have \\
		\begin{equation}
		\frac{2\sqrt{\epsilon_0 \epsilon }}{(\epsilon+\epsilon_0)\sqrt{1+\frac{\epsilon-1}{\epsilon_0 \sigma}}}\frac{\sqrt{\frac{1}{2}+\alpha}}{\sqrt{\frac{1}{2}-\alpha}}(1-\alpha)>0, f_{statistic}^k(\epsilon) -\sum_{i=0}^{\alpha k -1}g(\epsilon,k,i)=o(1)
		\end{equation}
		We can get $\frac{d(ln(g(\epsilon,k,i)))}{dk}>0$, so
		\begin{equation}
		g(\epsilon,k+2,i)>g(\epsilon,k+1,i)>g(\epsilon,k,i), \epsilon  >1
		\end{equation}
		Here comes to the conclusion:
		\begin{equation}
		f_{F_{statistic}}^k=\sum_{i=0}^{\frac{k}{2}-1}g(\epsilon,k,i)<\sum_{i=0}^{\frac{k}{2}}g(\epsilon,k,i)<\sum_{i=0}^{\frac{k}{2}}g(\epsilon,k+2,i)=f_{F_{statistic}}^{k+2} 
		\end{equation}
	\end{proof}
	\noindent Also, we can give a simpler proof:
	\begin{proof}
		\begin{equation}
		\begin{aligned}
		&P(F_{statistic}>1) \\
		=&P(\sum_{t}{(\hat{Y}_{it}-Y^{\text{real}}_{it})}^2 > \sum_{t}{(\hat{Y}_{it|j}-Y^{\text{real}}_{it})}^2)\\
		=& 1-\sum_{m=0}^{+\infty} \frac{\Gamma(k+m) (\frac{1}{\epsilon_0})^{\frac{k}{2}+m}}{(\frac{1}{\epsilon_0}+1)^{k+m}\Gamma(\frac{k}{2})\Gamma(\frac{k}{2}+m+1)}\label{eqn:second proof}
		\end{aligned}
		\end{equation}
		Let's take the derivative of the next term:
		\begin{equation}
		\begin{aligned}
		&{( \sum_{m=0}^{+\infty} \frac{\Gamma(k+m) (\frac{1}{\epsilon_0})^{\frac{k}{2}+m}}{(\frac{1}{\epsilon_0}+1)^{k+m}\Gamma(\frac{k}{2})\Gamma(\frac{k}{2}+m+1)})}^{'} \\
		=&\phi(k+m)+\frac{1}{2}ln\frac{1}{\epsilon_0}-ln(\frac{1}{\epsilon_0}+1)-\phi(\frac{k}{2})-\phi(\frac{k}{2}+m+1)\\
		&<0
		\end{aligned}
		\end{equation}
		Therefore, equation \eqref{eqn:second proof} shows a single increasing trend with the increase of window length k.
	\end{proof}
	\section{Proof of Theorem 2}\label{proof-theorem-2}

	\begin{lemma}
		If ${\gamma_t}$ are k independent, circular symmetric complex Gaussian random variables with mean 0 and variance $\phi_t^2$, we have:
		\begin{equation}
		f_{\sum_{t=1}^{k}(\gamma_t^2)}\left(x ; k, \phi_{1}^{2}, \ldots, \phi_{k}^{2}\right)=\sum_{t=1}^{k} \frac{e^{-\frac{x}{\phi_{t}^{2}}}}{\phi_{t}^{2} \prod_{j=1, j \neq t}^{k}\left(1-\frac{\phi_{j}^{2}}{\phi_{t}^{2}}\right)} \text { for } x \geq 0
		\end{equation} \\
	\end{lemma}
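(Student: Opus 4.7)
The plan is to prove this lemma via a Laplace transform (moment generating function) argument combined with a partial fraction decomposition, exploiting the fact that for a circular symmetric complex Gaussian $\gamma_t$ with variance $\phi_t^2$, the squared magnitude $|\gamma_t|^2$ (which is what $\gamma_t^2$ must denote here, since the asserted density is supported on $x\ge 0$) is exponentially distributed with mean $\phi_t^2$. The claim is then exactly the standard hypoexponential density for a sum of independent exponentials with distinct rates $\lambda_t = 1/\phi_t^2$.

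First I would record the per-term density $f_{|\gamma_t|^2}(x) = \phi_t^{-2} e^{-x/\phi_t^2}$ on $[0,\infty)$, whose Laplace transform is $\mathcal{L}_t(s) = 1/(1 + s\phi_t^2)$. Independence of the $\gamma_t$ makes the Laplace transform of $S = \sum_{t=1}^{k} |\gamma_t|^2$ factorize as
\[
\mathcal{L}_S(s) = \prod_{t=1}^{k} \frac{1}{1 + s\phi_t^2}.
\]
Assuming the variances are pairwise distinct, I would then expand this rational function into partial fractions,
\[
\prod_{t=1}^{k} \frac{1}{1 + s\phi_t^2} \;=\; \sum_{t=1}^{k} \frac{c_t}{1 + s\phi_t^2},
\]
and extract each coefficient by the standard residue/cover-up trick at the pole $s = -1/\phi_t^2$, obtaining $c_t = \prod_{j \neq t}(1 - \phi_j^2/\phi_t^2)^{-1}$. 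Inverting term-by-term using the elementary pair $1/(1+s\phi_t^2) \leftrightarrow \phi_t^{-2}e^{-x/\phi_t^2}$ then delivers the stated formula.

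The main obstacle is not algebraic but a regularity caveat: the decomposition, and hence the identity as stated, requires pairwise distinct $\phi_t^2$, since otherwise the denominators $1-\phi_j^2/\phi_t^2$ vanish and the formula becomes indeterminate. I would address this either by restricting to the generic case (which is the one relevant to the reweighted series $A_{i,t}$, where the indexing matrix $\Phi$ perturbs the variances away from coincidence), or by passing through a limiting argument over repeated poles that introduces polynomial-in-$x$ prefactors to the exponentials. As a final sanity check I would verify $\sum_t c_t = 1$ by evaluating the partial-fraction identity at $s=0$, confirming $\int_0^\infty f_S(x)\,dx = 1$, and recover the familiar $k=2$ hypoexponential density $(e^{-x/\phi_1^2} - e^{-x/\phi_2^2})/(\phi_1^2 - \phi_2^2)$ as a consistency check.
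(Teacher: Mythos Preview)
Your argument is correct: recognizing that for a circular symmetric complex Gaussian $\gamma_t$ the quantity $|\gamma_t|^2$ is exponential with mean $\phi_t^2$, then pushing the sum through the Laplace transform, doing the partial-fraction expansion with the cover-up residues $c_t=\prod_{j\neq t}(1-\phi_j^2/\phi_t^2)^{-1}$, and inverting term-by-term is exactly the standard derivation of the hypoexponential density, and every step you wrote checks out. Your caveat about pairwise-distinct $\phi_t^2$ is also the right one; the formula as displayed is only valid in that generic case.

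There is no comparison to make with the paper's proof, because the paper does not supply one: the lemma is simply stated in the appendix and then invoked in the proof of the subsequent theorem as a known fact. So your proposal is not an alternative route but rather fills in a proof the authors omitted. If anything, you could shorten it by citing the hypoexponential distribution directly, but the self-contained Laplace-transform argument you give is clean and appropriate.
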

	\begin{theorem}
		for each k, consider the window $\{\phi_1,\phi_2,..\phi_k\}$, \textbf{the sufficient condition} for $\int_{1}^{+\infty}f_{F_{statistic}}^{k,\Phi}(\epsilon)>\int_{1}^{+\infty}f^{k}_{F_{statistic}}(\epsilon)$ is:
		\begin{equation}
		\sum_{q=1}^{\frac{k}{2}} \frac{1}{\hat \phi_q^2\prod_{j \neq q}(1-\frac{\hat \phi_j^2}{\hat \phi_q^2})}>\frac{\tau(k-1)}{\tau(\frac{k}{2})^2}{(\frac{|E(\Phi)|^2}{\phi_t^2})}_{max}, \hat \phi_{m}=\phi_{2m}=\phi_{2m-1} \label{final-condition1}
		\end{equation}
	\end{theorem}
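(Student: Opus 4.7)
The plan is to compare the two tail probabilities $P(F_{statistic}^{k,\Phi} > 1)$ and $P(F_{statistic}^{k} > 1)$ by expressing each as an integral against the density of a sum of squared Gaussians with, respectively, heterogeneous and homogeneous variances. The heterogeneous case is exactly what the preceding Lemma supplies: once the reweighting $A_t = \Phi_t Y_t$ from Eqn.~\eqref{eqn:index_series} is applied, the residual noise entering $L_1$ and $L_2$ has variance $\phi_t^2 = |\Phi_t|^2$ (up to the constant noise scale), so the term $\sum_t \gamma_t^2$ that produced the chi-square factor in Theorem~1 becomes a generalized chi-square whose density is exactly the one stated in the Lemma. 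Without $\Phi$, this collapses to the ordinary $\chi^2(k)$ used in the proof of Theorem~1.

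First I would redo the derivation that produced the series expansion in the proof of Theorem~1, but with the $\chi^2(k)$ density replaced by the weighted density $f_{\sum \gamma_t^2}(x; k, \phi_1^2, \dots, \phi_k^2)$ from the Lemma. The pairing $\hat\phi_m = \phi_{2m} = \phi_{2m-1}$ reflects the fact that the same noise variances show up identically in both the numerator and denominator of the $F$-statistic once one conditions on the shared real signal, so only $k/2$ distinct values ultimately parameterize the relevant density. Carrying the integral through, the $\Phi$-weighted tail probability becomes a power series in $\epsilon - 1$ whose leading coefficient is proportional to $\sum_{q=1}^{k/2} 1/(\hat\phi_q^2 \prod_{j \neq q}(1-\hat\phi_j^2/\hat\phi_q^2))$, while the baseline tail probability produces the corresponding coefficient $\tau(k-1)/\tau(k/2)^2$, multiplied by the scale factor $(|E(\Phi)|^2/\phi_t^2)_{\max}$ that accounts for the uniform Hadamard rescaling of the signal.

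A term-by-term comparison then yields a sufficient condition for $\int_1^{+\infty} f_{F_{statistic}}^{k,\Phi}(\epsilon)\,d\epsilon > \int_1^{+\infty} f_{F_{statistic}}^{k}(\epsilon)\,d\epsilon$: it is enough that the leading coefficient of the $\Phi$-weighted series dominates that of the unweighted one, which is exactly the inequality~\eqref{final-condition1}. The factor $(|E(\Phi)|^2/\phi_t^2)_{\max}$ is needed because the Hadamard rescaling stretches the signal uniformly, and dividing through places the two densities on the same noise baseline.

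The main obstacle I expect is upgrading coefficient-wise dominance of the leading term into dominance of the whole tail integral, since the higher-order terms in the $\epsilon-1$ expansion could in principle spoil the inequality. I would handle this either by showing those terms remain positive and are themselves dominated, using the monotonicity of the partial-fraction weights supplied by the Lemma, or by absorbing them into the worst-case factor $(|E(\Phi)|^2/\phi_t^2)_{\max}$; the phrasing ``sufficient condition'' in the theorem statement suggests that a potentially loose bound of this kind is what is intended. A secondary difficulty is rigorously justifying the pairing $\hat\phi_m = \phi_{2m} = \phi_{2m-1}$, which requires careful bookkeeping of how $\Phi$ enters both $L_1$ and $L_2$ after the Hadamard rescaling before the Lemma can be invoked.
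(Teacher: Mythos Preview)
Your overall strategy matches the paper's: replace the $\chi^2(k)$ density in the Theorem~1 computation by the generalized chi-square density from the Lemma, expand both $f^{k,\Phi}_{F_{statistic}}$ and $f^{k}_{F_{statistic}}$ as power series, and compare. However, your identification of where the inequality~\eqref{final-condition1} actually comes from is off, and the ``obstacle'' you flag is a symptom of that.

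You say the sufficient condition is that the \emph{leading} coefficient of the $\Phi$-weighted series dominates the leading coefficient of the unweighted one, and then worry about higher-order terms spoiling the tail integral. That is not how the paper obtains~\eqref{final-condition1}. The paper writes $f^{k,\Phi}_{F_{statistic}}(\epsilon)=\sum_{q=1}^{k/2}\sum_{i=0}^{k/2-1} g_2(\epsilon,k,i,\phi_q)$ and $f^{k}_{F_{statistic}}(\epsilon)=\sum_{i=0}^{k/2-1} g(\epsilon,k,i)$, and takes as sufficient condition the \emph{term-by-term} inequality $\sum_{q} g_2(\epsilon,k,i,\phi_q) > g(\epsilon,k,i)$ for \emph{every} index $i$, not just $i=0$. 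This immediately gives dominance of the full integral with no residual ``higher-order'' issue. The stated inequality~\eqref{final-condition1} then arises by simplifying this family of inequalities: one bounds the $i$-dependent ratio $\tau(\tfrac{k}{2}+i)/\bigl(\tau(i+1)\tau(\tfrac{k}{2})\bigr)$ by its worst case over $i\in\{0,\dots,\tfrac{k}{2}-1\}$, which produces the factor $\tau(k-1)/\tau(\tfrac{k}{2})^2$, and one replaces the $q$-dependent scale by $\bigl(|E(\Phi)|^2/\phi_t^2\bigr)_{\max}$. So the combinatorial factor you attribute to a single ``baseline coefficient'' is in fact the output of a uniform-in-$i$ bound, and once you set up the comparison that way your obstacle disappears.

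A smaller point: your rationale for the pairing $\hat\phi_m=\phi_{2m}=\phi_{2m-1}$ (that the same variances appear in numerator and denominator) is not what is going on. In the paper the pairing is simply an \emph{assumption} imposed so that the Lemma's partial-fraction formula, which requires distinct variances, can be applied with $k/2$ distinct parameters; it is not derived from the structure of $L_1$ versus $L_2$.
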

	\begin{proof}
		The $F_{statistic}$ can be represented as follows: 
		\begin{equation}
		\begin{aligned}
		F_{statistic} \approx \frac{\sum_{t=1}^{k}{(\hat Y_t^{\Phi}-\phi_t Y_t^{real})}^2+\sum_{t=1}^{k} (\phi_t \gamma_{t})^2 }{\sum_{t=1}^{k}{( \hat Y_{t|j}^{\Phi}-\phi_t Y_t^{real})}^2+\sum_{t=1}^{k} (\phi_t\gamma_{t})^2 }\label{21}
		\end{aligned}
		\end{equation}
		According to \eqref{21}, two influencing factors-prediction error and noise-have changed.
		Using lemma 2, assume $\phi_{2m}=\phi_{2m-1}$, we do the series expansion $f^{k,\Phi}_{F_{statistic}}(\epsilon) = \sum_{q=1}^{\frac{k}{2}} \sum_{t=0}^{\frac{k}{2}-1} g_2(\epsilon,k,t,\phi_q)$,
		so the sufficient condition for $\int_{\epsilon_{min}}^{\epsilon_{max}}f_{F_{statistic}}^{k,\Phi}(\epsilon)>\int_{\epsilon_{min}}^{\epsilon_{max}}f^{k}_{F_{statistic}}(\epsilon)$ is: 
		\begin{equation}\label{final-condition2}
		\sum_{q=1}^{\frac{k}{2}} g_2(\epsilon,k,t,\phi_q) > g(\epsilon,k,t), 
		\end{equation}
		Comparing the series expansion of naive DWGC and DWGC, the sufficient condition of \eqref{final-condition2} is:

		$\forall k,q:$
		\begin{equation}
		\sum_{q=1}^{\frac{k}{2}}\frac{k\tau(i+1)}{2 \phi_q^2\prod_{j \neq q}(1-\frac{\phi_j^2}{\phi_q^2})(\frac{k}{2\phi_q^2}+\frac{k(\epsilon-1)}{2E|(\Phi)|^2\sigma \epsilon_0})^{i+1}|E(\Phi)|^{2i+2}} > \frac{(\frac{k}{2})^{\frac{k}{2}}\tau(\frac{k}{2}+i)}{(\frac{k}{2}+\frac{k(\epsilon-1)}{2\sigma\epsilon_0})^{\frac{k}{2}+i}\tau(\frac{k}{2})}
		\end{equation}
		The sufficient condition can be converted to: :\\
		\begin{equation}
		\forall q, \sum_{q=1}^{\frac{k}{2}}\frac{k\tau(i+1)}{2\phi_q^2\prod_{j \neq q}(1-\frac{\phi_j^2}{\phi_q^2})(\frac{k|E(\Phi)|^2}{2\phi_q^2}+\frac{k(\epsilon-1)}{2\sigma \epsilon_0})^{i+1}} > \frac{(\frac{k}{2})^{\frac{k}{2}}\tau(\frac{k}{2}+i)}{(\frac{k}{2}+\frac{k(\epsilon-1)}{2\sigma\epsilon_0})^{\frac{k}{2}+i}\tau(\frac{k}{2})} 
		\end{equation}
		\begin{equation}
		\forall q, \sum_{q=1}^{\frac{k}{2}}\frac{k\tau(i+1)}{2\phi_q^2\prod_{j \neq q}(1-\frac{\phi_j^2}{\phi_q^2})} > \frac{(\frac{k}{2}(\frac{(|E(\Phi)|^2}{\phi_q^2})_{max}+\frac{k(\epsilon-1)}{2\sigma \epsilon_0})^{i+1}(\frac{k}{2})^{\frac{k}{2}}\tau(\frac{k}{2}+i)}{(\frac{k}{2}+\frac{k(\epsilon-1)}{2\sigma\epsilon_0})^{\frac{k}{2}+i}\tau(\frac{k}{2})}\\
		\end{equation}
		\begin{equation}
		\forall q, \sum_{q=1}^{\frac{k}{2}}\frac{k\tau(i+1)}{2\phi_q^2\prod_{j \neq q}(1-\frac{\phi_j^2}{\phi_q^2})} > \frac{(\frac{k}{2}(\frac{(|E(\Phi)|^2}{\phi_q^2})_{max}+\frac{k(\epsilon-1)}{2\sigma \epsilon_0})(\frac{k}{2})^{\frac{k}{2}}\tau(\frac{k}{2}+i)}{(\frac{k}{2}+\frac{k(\epsilon-1)}{2\sigma\epsilon_0})^{\frac{k}{2}}\tau(\frac{k}{2})}\\
		\end{equation}
		\begin{equation}
		\begin{aligned}
		\forall q, \sum_{q=1}^{\frac{k}{2}}\frac{k}{2\phi_q^2\prod_{j \neq q}(1-\frac{\phi_j^2}{\phi_q^2})} >& \frac{(\frac{k}{2}(\frac{(|E(\Phi)|^2}{\phi_q^2})_{max})(\frac{k}{2})^{\frac{k}{2}}\tau(\frac{k}{2}+i)}{(\frac{k}{2})^{\frac{k}{2}}\tau(\frac{k}{2})\tau(i+1)}\\=&\frac{\tau(k-1)}{\tau(\frac{k}{2})^2} {(\frac{|E(\Phi)|^2}{\phi_q^2})}_{max} \\\label{35}
		\end{aligned}
		\end{equation}
		In all, the sufficient condition of improving the naive DWGC method is:
		\begin{equation}
		\begin{cases}
		\forall q, {(\frac{|E(\Phi)|^2}{\phi_q^2})}_{max} = \beta,& \text{}\\
		\sum_{q=1}^{\frac{k}{2}} \frac{1}{\phi_q^2\prod_{j \neq q}(1-\frac{\phi_j^2}{\phi_q^2})}>\frac{\tau(k-1)}{\tau(\frac{k}{2})^2}\beta & \text{}
		\end{cases}\label{final-condition}
		\end{equation}
	\end{proof}
	
	Under this sufficient condition \eqref{final-condition1}, we can add a regularization term to the original loss \eqref{eqn:model_loss}.
	\begin{equation}
	\alpha \sum_{i,l}|\phi^{i}_{2l}-\phi^{i}_{2l+1}|+\beta Relu( \frac{\tau(k-1)}{\tau(\frac{k}{2})^2}{(\frac{|E(\Phi)|^2}{\phi_p^2})}_{max}-\sum_{q=1}^{\frac{k}{2}} \frac{1}{\hat \phi_q^2\prod_{j \neq q}(1-\frac{\hat \phi_j^2}{\hat \phi_q^2})}), \label{condition_reg}
	\end{equation}

	where $\sum_{i,l}|\phi_{2l}-\phi_{2l+1}|$ aims to prompt two adjacent $\phi_t $'s the same, $ 2l,2l+1,p \in (t, t+k-1)$, $\hat \phi_m = \frac{1}{2}(\phi_{2m}+\phi_{2m+1})$. 
	
	At last, We further explain that it is possible to satisfy the inner variable of $Relu(\cdot)$ less than 0, so the loss function can be theratically easy to train after adding the regularization term.
	
	We convert the $Relu(\cdot)$ of the regularization term \eqref{condition_reg} into a more concise form:
	\begin{equation}
	\begin{aligned}
	&\sum_{i,l}|\phi^{i}_{2l}-\phi^{i}_{2l+1}|+ \alpha_1 (N-|\sum_{m=t}^{t+\frac{k}{2}-1}\phi_m^2|)^{+} + \alpha_2((\frac{E(\phi)^2}{\phi_{p}^2})_{max}-Q)^{+}+ \\ &(k log2-log\sum_{m=t}^{m+\frac{k}{2}-1}\frac{1}{\phi_m^2})^++\sum_{2m \in (t,t+\frac{k}{2}-1)}(\prod_{j \neq 2m}|1-\frac{\phi_j^2}{\phi_{2m}^2}|-1)^+\\ \label{condition28}
	\end{aligned}
	\end{equation}
	$\sum_{i,l}|\phi_{2l}-\phi_{2l+1}|$ aims to prompt two adjacent $\phi_t $'s the same, where $ 2l,2l+1,p \in (t, t+k-1)$, $\alpha_1,\alpha_2$,N,Q are constants. We use $[1]\sim[5]$ respectively to represent each item in regularization\eqref{condition28}, and note where to use in the following derivation.
	
	Without prejudice to the problem, we sort the $\Phi_q$ in the above equation:$\phi_1<\phi_2<...<\phi_{\frac{k}{2}}$, and $\frac{k}{2}$ is an even number, in equation \eqref{final-condition},if $q \in {2k+1,k=0,1,2,...}$, $\frac{1}{\Phi_q^2\prod_{j \neq q}(1-\frac{\phi_j^2}{\phi_q^2})}<0$, if $q \in {2k,k=1,2,3...}$, $\frac{1}{\Phi_q^2\prod_{j \neq q}(1-\frac{\phi_j^2}{\phi_q^2})}>0$, 
	So the sufficient condition of equation \eqref{final-condition}:
	\begin{equation}
	\begin{aligned}
	\sum_{q=1}^{\frac{k}{2}} \frac{1}{\phi_q^2\prod_{j \neq q}(1-\frac{\phi_j^2}{\phi_q^2})} &> \sum_{m=1}^{\frac{k}{4}}\frac{1}{\phi_{2m}^2}[5]+\sum_{m=1}^{\frac{k}{4}}\frac{1}{\phi_{2m-1}^2(1-\frac{\sum_q\phi_q^2-\phi_{2m-1}^2}{\phi_{2m-1}^2})}\\&>\sum_{m=1}^{\frac{k}{4}}\frac{1}{\phi_{2m}^2}-\sum_{m=1}^{\frac{k}{4}}\frac{1}{\sum_{i=1}^{\frac{k}{2}}\phi_{i}^2-2\phi_{2m}^2}
	\end{aligned}
	\end{equation}
	
	Here, we have $\phi_{2}^2<\phi_{4}^2<\phi_{6}^2<...<\phi_{2m-4}^2<\frac{1}{3}\sum_{i=1}^{\frac{k}{2}}\phi_i^2$, $\exists \alpha, \phi_{2m-4}^2 = \frac{1-\alpha}{3-2\alpha}\sum_{i=1}^{\frac{k}{2}}\phi_i^2$. The sufficient condition in \eqref{final-condition} can be transformed as:
	\begin{equation}
	\sum_{q=1}^{\frac{k}{2}-4} \frac{1}{\phi_q^2\prod_{j \neq q}(1-\frac{\phi_j^2}{\phi_q^2})} > \sum_{m=1}^{\frac{k}{4}-2}\frac{1}{\phi_{2m}^2}-\sum_{m=1}^{\frac{k}{4}-2}\frac{1}{\sum_{i=1}^{\frac{k}{2}}\phi_{i}^2-2\phi_{2m}^2}> \sum_{m=1}^{\frac{k}{4}-2}\frac{\alpha}{\phi_{2m}^2}\label{38}
	\end{equation}
	\begin{equation}
	\sum_{q=\frac{k}{2}-3}^{\frac{k}{2}} \frac{1}{\phi_q^2\prod_{j \neq q}(1-\frac{\phi_j^2}{\phi_q^2})} =\sum_{m=\frac{k}{4}-1}^{\frac{k}{4}}\frac{1}{\phi_{2m}^2}-\sum_{m=\frac{k}{4}-1}^{\frac{k}{4}}\frac{1}{\sum_{i=1}^{\frac{k}{2}}\phi_{i}^2-2\phi_{2m}^2}>-\frac{2}{\sum_{i=1}^{\frac{k}{2}}\phi^2}
	\end{equation}
	\begin{equation}
	\sum_{q=1}^{\frac{k}{2}} \frac{1}{\phi_q^2\prod_{j \neq q}(1-\frac{\phi_j^2}{\phi_q^2})} >\sum_{m=1}^{\frac{k}{4}-2}\frac{\alpha}{\phi_{2m}^2}-\frac{2}{\sum_{i=1}^{\frac{k}{2}}\phi^2} >\sum_{m=1}^{\frac{k}{4}-2}\frac{\alpha}{\phi_{2m}^2}-\frac{2}{N}[2]
	\end{equation}
	So under the regularization term, we have $\sum\frac{\alpha}{\phi^2}>\frac{\tau(k-1)}{\tau(\frac{k}{2})^2}(\frac{E(\Phi)^2}{\phi_t^2})_{max}[3][4]$ in most cases. Therefore, training regularization term \eqref{condition28} is a sufficient condition for training the original \eqref{condition_reg}, while \eqref{condition28} is easy to train and converge.

\end{document}